\documentclass[aps,pra,reprint,amsmath,amssymb,nofootinbib,floatfix,superscriptaddress]{revtex4-2}

\usepackage{graphicx}
\usepackage{booktabs}
\usepackage{amsthm}
\usepackage{dsfont}
\usepackage{algpseudocode}
\makeatletter
\newcounter{algorithm}
\renewcommand{\thealgorithm}{\arabic{algorithm}}
\def\fps@algorithm{tbp}
\def\ftype@algorithm{4}
\def\ext@algorithm{loa}
\providecommand{\fname@algorithm}{Algorithm}
\def\fnum@algorithm{\fname@algorithm\nobreakspace\thealgorithm}
\newenvironment{algorithm}{%
  \let\alg@mkcap\@makecaption
  \def\@makecaption##1##2{%
    \hrule height.8pt depth0pt\kern2pt
    {\setlength\abovecaptionskip{2pt}%
     \alg@mkcap{##1}{##2}}%
    \kern2pt\hrule height.4pt\kern2pt}%
  \@float{algorithm}}{%
  \kern2pt\hrule height.4pt\relax
  \end@float}
\makeatother
\usepackage{tikz}
\usepackage{orcidlink}
\providecommand{\orcid}[1]{\orcidlink{#1}}
\usepackage{hyperref}
\hypersetup{hidelinks}
\usetikzlibrary{arrows.meta, positioning, fit, calc, shapes.symbols, patterns, backgrounds}

\newtheorem{theorem}{Theorem}
\newtheorem{lemma}{Lemma}
\newtheorem{proposition}{Proposition}

\newcommand{\ket}[1]{\lvert #1 \rangle}
\newcommand{\bra}[1]{\langle #1 \rvert}

\newcommand{\diag}{\mathrm{diag}}
\newcommand{\good}{\mathrm{good}}
\newcommand{\bad}{\mathrm{bad}}
\newcommand{\poly}{\mathrm{poly}}
\newcommand{\Ham}{\operatorname{Ham}}

\newcommand{\casql}{Laboratory of Quantum Information, University of Science and Technology of China, Hefei, Anhui, 230026, China}
\newcommand{\casex}{Anhui Province Key Laboratory of Quantum Network, University of Science and Technology of China, Hefei 230026, China}
\newcommand{\aihf}{Institute of Artificial Intelligence, Hefei Comprehensive National Science Center, Hefei, Anhui, 230088, China}
\newcommand{\origin}{Origin Quantum Computing Technology (Hefei) Co., Ltd., Hefei, Anhui, 230026, China}

\begin{document}

\title{Classical simulation of amplitude-damped bucket-brigade quantum random access memories via predictable branch evolution}

\author{Zhao-Yun Chen\,\orcid{0000-0002-5181-160X}}
\email{chenzhaoyun@iai.ustc.edu.cn}
\affiliation{\aihf}

\author{Ming-Yang Tan}
\affiliation{\casql}

\author{Sheng Zhang}
\affiliation{\casql}

\author{Peng Wang}
\affiliation{\casql}

\author{Yun-Jie Wang\,\orcid{0009-0000-5347-5286}}
\affiliation{\casql}

\author{Guo-Ping Guo\,\orcid{0000-0002-2179-9507}}
\affiliation{\casql}
\affiliation{\casex}
\affiliation{\aihf}
\affiliation{\origin}

\begin{abstract}
Classical simulation of quantum random access memory (QRAM) under noise can be accelerated dramatically by branch pruning: noise histories are sampled as trajectory ensembles, and good branches, whose routing paths avoid every sampled fault, need not be evolved because their final states are analytically predictable. Under amplitude damping this predictability is not automatic, because the no-jump operator $K_0$ acts on every branch at every time slice. Here we give a complete account of damping-channel predictability in bucket-brigade QRAM simulation, for both qutrit and qubit encodings. For the qutrit encoding, $K_0$ is diagonal in the node basis, and the wait state $\ket{W}$ is its fixed point, so a good branch follows the noiseless orbit up to a scalar attenuation. In the qubit encoding, phase-kickback memory fetching between two Hadamard walls makes $K_0$ no longer diagonal, yet the resulting structure is exactly solvable in closed form. Conditional on no jump, the in-branch infidelity of a good branch is second order in the damping rate $\gamma$, with coherent leakage $\sim n^2\gamma^2/4$ per data qubit, and the closed form captures it without approximation. We also revise the pruning criterion for the qubit encoding. The resulting algorithm evolves only the bad branches plus one reference branch and shares every other cost with the full evolution it replaces. It is validated by end-to-end benchmarks with pruned-over-full speedups up to $285\times$ and by trajectory-level tests confirming every closed form to machine precision.
\end{abstract}

\keywords{quantum random access memory, noisy circuit simulation, amplitude damping, branch pruning, bucket brigade}

\maketitle

\begin{figure*}[t]
\centering
\resizebox{0.97\textwidth}{!}{%
\begin{tikzpicture}[
  font=\small,
  qnode/.style={draw, rounded corners=2pt, minimum width=17pt, minimum height=13pt, inner sep=1pt, fill=white},
  cell/.style={draw, rectangle, minimum width=15pt, minimum height=10pt, inner sep=0pt, font=\scriptsize},
  faultx/.style={starburst, starburst point height=2.6pt, minimum size=11pt, inner sep=0pt, draw=red!70!black, fill=red!25, font=\scriptsize},
  faultk/.style={starburst, starburst point height=2.6pt, minimum size=11pt, inner sep=0pt, draw=blue!70!black, fill=blue!25, font=\scriptsize},
  >=Stealth]

\node[qnode] (R) at (2.65, 7.6) {$a|d$};
\node[qnode] (A) at (1.0, 6.4) {$a|d$};
\node[qnode] (B) at (4.3, 6.4) {$a|d$};
\node[qnode] (C) at (0.3, 5.2) {$a|d$};
\node[qnode] (D) at (1.7, 5.2) {$a|d$};
\node[qnode] (E) at (3.6, 5.2) {$a|d$};
\node[qnode] (F) at (5.0, 5.2) {$a|d$};
\draw[thick] (R) -- (A); \draw (R) -- (B);
\draw[thick] (A) -- (C); \draw (A) -- (D);
\draw[very thick, teal!70!black] (R) to[bend left=12] (A);
\draw (B) -- (E); \draw (B) -- (F);

\node[cell] (c0) at (0.0, 4.0) {$c_0$};
\node[cell] (c1) at (0.60, 4.0) {$c_1$};
\node[cell] (d0) at (1.40, 4.0) {$d_0$};
\node[cell] (d1) at (2.00, 4.0) {$d_1$};
\node[cell] (e0) at (3.30, 4.0) {$e_0$};
\node[cell] (e1) at (3.90, 4.0) {$e_1$};
\node[cell] (f0) at (4.70, 4.0) {$f_0$};
\node[cell] (f1) at (5.30, 4.0) {$f_1$};
\draw (C) -- (c0); \draw (C) -- (c1);
\draw (D) -- (d0); \draw (D) -- (d1);
\draw (E) -- (e0); \draw (E) -- (e1);
\draw (F) -- (f0); \draw (F) -- (f1);

\draw[very thick, teal!70!black] (A) to[bend right=10] (C);
\draw[very thick, teal!70!black] (C) -- (c0);
\node[teal!70!black, font=\scriptsize, align=left, anchor=west] at (-1.2, 7.4)
  {Reference good branch,\\ evolved explicitly};

\node[faultx] (X) at (3.6, 4.65) {$X$};
\node[faultk] (K) at (1.7, 4.65) {$K_1$};

\begin{scope}[on background layer]
  \node[draw=red!70!black, dashed, rounded corners=10pt, fill=red!8,
        inner xsep=10pt, inner ysep=10pt, fit=(B)(E)(F)(e0)(e1)(f0)(f1)] (badx) {};
  \node[draw=blue!70!black, dashed, rounded corners=8pt, fill=blue!8,
        inner xsep=3pt, inner ysep=4pt, fit=(D)(d0)(d1)] (badk) {};
\end{scope}
\draw[->, dashed, red!70!black] (E.north) to[bend left=22]
  node[midway, left=1pt, font=\tiny] {climb} (B.south);
\node[red!70!black, font=\scriptsize, align=center] at (4.75, 7.4)
  {$X$ fault at a left child:\\ $\mathrm{bad}=\mathrm{subtree}(\mathrm{parent})$};
\node[blue!70!black, font=\scriptsize, align=center] at (1.7, 2.95)
  {$K_1$ jump at a right child:\\ $\mathrm{bad}=\mathrm{subtree}(v)$};

\node[draw, rounded corners=4pt, fill=gray!6, anchor=north west, minimum width=9.4cm, minimum height=5.2cm] (box) at (6.6, 7.9) {};
\node[font=\small\bfseries, anchor=north west] at (6.85, 7.7) {Prediction of a good branch $(i,j)$ by the XOR mirror};
\node[anchor=north west, align=left, font=\small] at (6.85, 7.15)
  {1. Evolve the reference trajectory, whose data register ends as\\
   \phantom{1. }$\sum_{c'} A(c')\ket{c'}$ with $A(c)=\alpha^{\,k-w}\beta^{\,w}$,\\
   \phantom{1. }$\alpha,\beta=\tfrac12(1\pm a^{2n})$, \; $w=\Ham(c\oplus b_{\mathrm{out}})$;};
\node[anchor=north west, align=left, font=\small] at (6.85, 5.75)
  {2. Map every component by the XOR mirror\\
   \phantom{2. }$c \;=\; c' \oplus \delta_{ij}$, \quad
   $\delta_{ij}= b_{\mathrm{out}}^{\mathrm{ref}} \oplus b_{\mathrm{out}}^{(i,j)}$
   \; $\left(b_{\mathrm{out}}=j\oplus d_{i}\right)$;};
\node[anchor=north west, align=left, font=\small] at (6.85, 4.6)
  {3. Rescale by the address-exposure counter\\
   \phantom{3. }$\mathrm{amp}_{ij}(c) \;=\; \mathrm{amp}_{\mathrm{ref}}(c\oplus\delta_{ij})\,
   (1-\gamma)^{\Delta c_{\mathrm{addr}}/2}$,\\
   \phantom{3. }$\Delta c_{\mathrm{addr}}=c_{\mathrm{addr}}(i)-c_{\mathrm{addr}}(i_{\mathrm{ref}})$
   from the schedule, Eq.~\eqref{eq:caddr}.};
\draw[->, thick, gray] (R.north) to[bend left=18] (6.6, 7.72);
\end{tikzpicture}}
\caption{%
Concept of the pruned simulation of a qubit-encoded $(n,k)$-QRAM for one trajectory, with $n=3$ shown. Each node carries an address qubit, used for routing, and a data qubit. Idle nodes have $a=0$, which is indistinguishable from the pointer value ``route left''. At the top, an $X$-type fault at a left child marks as bad the subtree of its nearest right-child ancestor---the parent in the example drawn. Its stuck excitation climbs up through idle left-pointing ancestors and splits the off-path components into distinct final-configuration families, as captured by Eq.~\eqref{eq:badrange}. A fault at a right child, in contrast, marks only its own subtree. The first good branch, drawn as a thick line, is evolved explicitly together with all bad branches. Every remaining good branch is predicted exactly by the XOR mirror of Eqs.~\eqref{eq:hamming} and~\eqref{eq:goodbranch}. Its data register is the reference's register with the ideal output XOR-shifted, and its amplitude carries the address-exposure factor $(1-\gamma)^{\Delta c_{\mathrm{addr}}/2}$. The marking rule is applied uniformly to every sampled fault, whatever its type, and the prediction is exact on the no-jump side by Theorem~\ref{thm:goodbranch} and on the jump side by the joint sampler of Sec.~\ref{sec:jumps}.}
\label{fig:concept}
\end{figure*}

\section{Introduction}
\label{sec:intro}

Quantum random access memory (QRAM) implements the addressing oracle
\begin{equation}
\sum_{i,j}\alpha_{i,j}\ket{i}_A\ket{j}_D \;\longrightarrow\; \sum_{i,j}\alpha_{i,j}\ket{i}_A\ket{j\oplus d_i}_D
\label{eq:nk-qram}
\end{equation}
with query depth $\mathcal{O}(\log N)$ at the price of $\mathcal{O}(N)$ routing elements~\cite{giovannetti2008qram, giovannetti2008arch}, and underlies quantum search over structured data~\cite{grover1997search}, quantum state preparation~\cite{mottonen2004transformation, wang2025mps}, and quantum machine-learning data loaders~\cite{harrow2009quantum, biamonte2017quantum, kerenidis2017quantum, park2019qram, liu2023datacenter}. Diverse QRAM hardware architectures have been proposed, ranging from hybrid acoustic and spin-photon platforms to superconducting circuits~\cite{hann2019acoustic, kchen2021spin, weiss2023cavity, wang2023hardware}; see Ref.~\cite{jaques2023survey} for a recent survey and critique. Coherent bucket-brigade routing has recently been demonstrated on superconducting processors~\cite{zhang2025router, shen2025experiment, miao2025router}, while fault-tolerant resource estimates~\cite{dimatteo2019fault} and fundamental causal bounds~\cite{ywang2024causal} frame the resource question. Its faithful noisy simulation is therefore an essential tool~\cite{hann2021resilience}, building on general-purpose simulators for open quantum systems and quantum circuits~\cite{johansson2012qutip, jones2019quest, chen2018sim, wang2021sunway}, and was recently made tractable by a sparse-state branch simulator~\cite{wang2025qram} that exploits two observations. The bucket-brigade circuit consists entirely of non-branching reversible gates, so the tree configuration of each input basis component follows a unique classical orbit. Under the Monte Carlo trajectory expansion of the noise channels, each sampled fault affects only those branches whose routing path crosses its subtree. The complementary branches, which we call good, can therefore be predicted analytically instead of evolved. Only the bad branches and one reference good branch are simulated explicitly, giving a cost set by the expected number of bad branches rather than by $D$. Here $D$ is the number of input branches, $n$ the address size, and $p$ the per-element noise rate.

That machinery was developed and implemented for the qutrit encoding, in which each routing node is a three-level system $\{\ket{W},\ket{L},\ket{R}\}$~\cite{zhang2024aux}. Its soundness under amplitude damping, however, has never been analyzed. The simulator applies the no-jump operator $K_0$ deterministically at every time slice, yet the reason a good branch remains predictable under this ever-present attenuation is used implicitly and stated nowhere. The first task of this paper is to write the mechanism down as Proposition~\ref{prop:qutrit}. The operator $K_0$ is diagonal in the node basis, and the wait state $\ket{W}$ is its fixed point, so a good branch follows the noiseless orbit and damping contributes only a scalar factor $a^{c}$ with $a=\sqrt{1-\gamma}$. The exposure count $c$ counts excitations over exposed slices and is computable from the schedule table. The physically simpler qubit encoding, with two two-level systems per node, $\ket{0}\equiv L$, $\ket{1}\equiv R$, and no wait state, lacks both of these properties. Its memory fetch is a controlled phase kickback~\cite{chen2023qram}, which forces the data bus through a Hadamard wall into the $\{\ket+,\ket{-}\}$ basis, through the tree, and back through a second Hadamard wall. Between the two walls, the no-jump operator $K_0=\diag(1,a)$ is not diagonal in the propagation basis, so the scalar predictor fails.

In this paper, we give a complete form of damping-channel predictability for both encodings. We first make the qutrit mechanism explicit, and our main new result, Lemma~\ref{lem:hkh} and Theorems~\ref{thm:multi}--\ref{thm:goodbranch}, is that the qubit $H\!\to\!K_0^d\!\to\!H$ structure of a good branch is exactly solvable: the second Hadamard acts on the tilted state $\left(\ket{0}+(-1)^{b\oplus m}a^{d}\ket{1}\right)/\sqrt{2}$, and the output amplitudes are $(1\pm a^{2n})/2$. We additionally identify a second, previously undocumented qubit-specific effect. $X$-type faults leave stuck excitations that migrate up idle left-pointing ancestors, so the good/bad partition must use a family-divergence rule, under which a fault at a left child marks the subtree of its nearest right-child ancestor and a fault at a right child marks its own; we verify the rule by exhaustive single-fault injection. Combined with the pre-existing address-part scalar counter and this corrected pruning criterion, the result is a complete qubit-encoded QRAM fast-simulation algorithm that evolves only the bad branches plus one reference branch, with a clean error structure: the conditional in-branch infidelity is second order in $\gamma$, and all first-order infidelity is carried by the trajectories that contain sampled faults. We validate the simulator at two levels. In end-to-end benchmarks, the pruned and full modes are driven from identical noise histories and agree seed by seed on the end state, with speedups up to $285\times$. In trajectory-level tests, each closed form and the error structure are verified against explicitly evolved branches and agree to machine precision.

This paper is organized as follows. Section~\ref{sec:background} reviews the bucket-brigade schedule and the noise model. Section~\ref{sec:prunesim} makes the qutrit mechanism explicit (Proposition~\ref{prop:qutrit}) and identifies where the qubit encoding breaks it. Section~\ref{sec:theory} establishes the exactly solvable qubit structure of a good branch (Lemma~\ref{lem:hkh}, Theorems~\ref{thm:multi}--\ref{thm:goodbranch}), defines the exact joint damping sampler (Theorem~\ref{thm:joint}), and analyzes the error structure (Sec.~\ref{sec:errors}). Section~\ref{sec:algorithm} presents the complete qubit-encoded QRAM fast-simulation algorithm, and Section~\ref{sec:results} reports the end-to-end benchmarks and the trajectory-level validation.

\section{Background}
\label{sec:background}

\subsection{Bucket-brigade circuit and time-sliced schedule}

We follow the $(n,k)$-QRAM convention of Refs.~\cite{chen2023qram, wang2025qram}. The setup has $n$ address qubits, $k$ data qubits, classical memory $d_i\in\{0,1\}^k$, and an auxiliary complete binary tree whose nodes at layer $l$ are indexed $v\in[2^{l}-1,\,2^{l+1}-2]$. In the qubit encoding each node carries an address qubit, with $\ket{0}\equiv$ route left and $\ket{1}\equiv$ route right, and a data qubit. There is no idle level. A query proceeds in three phases, namely address setting, data fetching, and uncomputation, using only SWAP and controlled-SWAP type layers plus the CZ phase-kickback fetch, a controlled-$Z$ with the classical memory cell~\cite{barenco1995gates}. The whole schedule is organized into
\begin{equation}
T = 6n + 2k
\end{equation}
mirror steps, of which the time slices $1,\dots,T-1$ carry events. The events relevant to us are, for data qubit $i\in\{0,\dots,k-1\}$,
\begin{align}
\tau_{\mathrm{in}}(i) &= 2n + 2i + 1, \nonumber\\
\tau_{\mathrm{fetch}}(i) &= 3n + 2i + 1, \label{eq:schedule}\\
\tau_{\mathrm{out}}(i) &= 4n + 2i + 1, \nonumber
\end{align}
i.e.\ CopyIn, the swap of the bus qubit into the root data slot, then the CZ fetch at the leaf, and finally CopyOut, with the mirror symmetry $\mathrm{out}(t)=T-t$, which maps the CopyIn time of data bit $i$ to the CopyOut time of bit $k-1-i$. In the qubit architecture the two Hadamard walls act on the whole data-bus register immediately before the first CopyIn and immediately after the last CopyOut. Note that the residence window
\begin{equation}
\tau_{\mathrm{out}}(i)-\tau_{\mathrm{in}}(i) \;=\; 2n
\label{eq:window}
\end{equation}
is the same for every data qubit, independent of the address and of all bit values, a fact that will make the data-bus predictor universal.

\subsection{Noise model and trajectory expansion}
\label{sec:noise}

Every time slice applies its gates and then attacks the tree qubits with noise. The damping layer applies the amplitude damping channel of strength $\gamma$~\cite{breuer2002book},
\begin{align}
K_0&=\ket{0}\bra{0}+\sqrt{1-\gamma}\,\ket{1}\bra{1}=\diag(1,a),\label{eq:kraus}\\
K_1&=\sqrt{\gamma}\,\ket{0}\bra{1},\nonumber
\end{align}
with $a\equiv\sqrt{1-\gamma}$, independently to every one of the $M=2(2^{n}-1)$ tree qubits; on a site carrying $\ket{0}$ in the qubit encoding, or $\ket{W}$ in the qutrit encoding, both Kraus outcomes act trivially. A depolarizing channel of strength $\varepsilon$ applies Pauli $X/Z/Y$ faults for qubits and the eight nontrivial Weyl operators for qutrits, whose shift elements act on all three levels, including the wait state. In the qutrit encoding the damping channel acts on the node levels $\{\ket{W},\ket{L},\ket{R}\}$ as~\cite{zhang2024aux, wang2025qram}
\begin{align}
K_0^{\mathrm{tri}}&=\ket{W}\bra{W}+a\left(\ket{L}\bra{L}+\ket{R}\bra{R}\right),\label{eq:kraus-tri}\\
K_{1}^{\mathrm{tri}}&=\sqrt{\gamma}\ket{W}\bra{L},\qquad
K_{2}^{\mathrm{tri}}=\sqrt{\gamma}\ket{W}\bra{R}.\nonumber
\end{align}
Two features of Eq.~\eqref{eq:kraus-tri} will be decisive. The operator $K_0^{\mathrm{tri}}$ is diagonal in the node basis, and the wait state is its fixed point, $K_0^{\mathrm{tri}}\ket{W}=\ket{W}$. Following Ref.~\cite{wang2025qram}, a simulation shot samples the complete noise history in advance, the quantum-trajectory Monte Carlo strategy~\cite{dalibard1992monte, plenio1998quantum} previously applied to quantum-algorithm noise benchmarking~\cite{xue2021noise}. Depolarizing faults apply their sampled unitary at their sampled sites. A damping layer is resolved as one joint Kraus outcome over the whole tree: a candidate set of sites is drawn state-independently, one auxiliary computational-basis draw from the current state selects the joint jump set among the candidates, and the product Kraus operator realizing that jump pattern is applied to the coherent state, followed by a single normalization. Section~\ref{sec:jumps} defines this joint sampler and proves that the trajectory average reproduces the trace-preserving channel exactly. Between layers the state is kept as an unnormalized sparse pure state; the norm is restored once per damping layer and at the final sampling. Averaging over shots reproduces channel-level observables.

\section{The pruning simulation of QRAM}
\label{sec:prunesim}
\subsection{Qutrit-encoded QRAM}
\label{sec:structure}

The pruning simulator of Ref.~\cite{wang2025qram} treats the QRAM as a program in which no gate ever branches. For each address $\ket{i}$ the tree-configuration orbit $\ket{\Psi_i(t)}$ is unique, and therefore classically computable without evolving the register. Its noise inherits the same rigidity. $K_0$ is diagonal in the node computational basis, so a good branch follows the noiseless orbit and damping contributes only the scalar $a^{c}$. The exposure count $c$ counts excitations over exposed slices and is read off the schedule table. The relative weight of two good branches is then $a^{\Delta c}$. Faults, meanwhile, stay where they occur. A fault at node $(l,p)$ affects only addresses in the leaf range of its subtree,
\begin{equation}
i\in\left[\,2^{\,n-l}p,\;2^{\,n-l}(p+1)-1\,\right],
\label{eq:subtree}
\end{equation}
so the good/bad partition is determined before evolution, with expected bad fraction $\mathcal{O}(n^{2}p)$. Appendix~\ref{app:prelim} summarizes the simulator's data representation, error sampling, and pruned evolution.

For qutrits the criterion rests on a stronger structural fact that is easy to overlook. A fault leaves every off-path component with one and the same final tree configuration. An address-slot flip of the flip type acts inside the $\{\ket{L},\ket{R}\}$ logic levels and cannot touch a wait-level qutrit; a stray excitation in a data slot is frozen in place by the $\ket{W}$ guard of the routing layer; and a shift-type fault, which does reach $\ket{W}$, relocates the affected idle site to a definite level inside its own subtree, identically in every off-path component. No excitation ever crosses from one off-path component's region into another's. As Sec.~\ref{sec:gap} shows, what fails for qubits is this configuration uniformity. Output damage containment, by contrast, holds in both encodings.

The diagonal damping assumption above is used in Ref.~\cite{wang2025qram} but never stated, and it is the assumption this paper generalizes. For the qutrit encoding it covers the data bus automatically. Qutrit fetching is a computational-basis CNOT, so no Hadamard walls exist and a good branch never splits.

\begin{proposition}[Qutrit damping predictability]
\label{prop:qutrit}
Let branch $(i,j)$ of the qutrit-encoded QRAM be good for a sampled noise history, i.e.\ its routing path avoids every sampled fault. Then its no-jump trajectory end state is
\begin{equation}
\ket{\Psi^{\good}_{i,j}}
=\;a^{\,c(i,j)}\,\ket{i}_A\ket{j\oplus d_i}_D\ket{Q_0}_{\mathrm{tree}},
\label{eq:qutritgood}
\end{equation}
where $\ket{Q_0}$ is a tree configuration shared by all good branches, fixed by the sampled history and equal to the idle all-$\ket{W}$ configuration when no sampled fault touches an idle site of a good component, and $c(i,j)\in\mathbb{N}$ is an exposure count, namely the excited $L/R$-valued slots summed over time slices, computable from the schedule table without evolution. The relative weight of two good branches is $a^{\Delta c}$ with $\Delta c=c(i,j)-c(i',j')$, and the in-branch deviation from the ideal output is exactly zero.
\end{proposition}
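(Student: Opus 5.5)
I will argue by induction over the time slices $t=1,\dots,T-1$, tracking a single good basis component $\ket{i}_A\ket{j}_D\ket{\mathrm{tree}}$ through the no-jump trajectory. The inductive invariant at slice $t$ is that the (unnormalized) component equals
\begin{equation*}
a^{\,c_t(i,j)}\,\ket{\Psi_{i,j}^{\mathrm{ideal}}(t)}\otimes\ket{Q_t}_{\mathrm{off}},
\end{equation*}
where $\ket{\Psi_{i,j}^{\mathrm{ideal}}(t)}$ is the noiseless orbit restricted to the routing path and the active registers of branch $(i,j)$. The factor $\ket{Q_t}_{\mathrm{off}}$ is a configuration of the sites off that path, and it depends only on the sampled history, not on $(i,j)$. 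The counter $c_t$ is a nonnegative integer.

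\textbf{Gate layers.} Every layer of the qutrit schedule is a permutation of computational basis states: SWAP, controlled-SWAP, routing, and the computational-basis CNOT fetch. So a basis component is mapped to a single basis component, and there is no Hadamard wall. This gives the unique classical orbit used in Sec.~\ref{sec:structure}. The off-path factor is untouched by the path gates, because on a good branch every routing control on the path reads its ideal value.

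\textbf{Noise layers.} For a damping layer conditioned on no jump, $K_0^{\mathrm{tri}}$ of Eq.~\eqref{eq:kraus-tri} is diagonal in $\{\ket{W},\ket{L},\ket{R}\}$. It therefore multiplies the component by $a^{e_t}$, where $e_t$ is the number of slots holding $L$ or $R$ at slice $t$, and it does not change the configuration. Since $K_0^{\mathrm{tri}}\ket{W}=\ket{W}$, idle sites contribute nothing. On the path, $e_t$ is read off the ideal orbit, which is the schedule table, so $c_{t+1}=c_t+e_t$.

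Off the path, the sampled faults lie only in subtrees the branch does not enter, by the definition of good and Eq.~\eqref{eq:subtree}. By the configuration-uniformity facts stated above, each such fault does one of three things. A flip-type fault cannot act on $\ket{W}$. A stray data excitation is frozen by the $\ket{W}$ guard. A shift-type fault moves an idle site to a definite level inside its own subtree, and does so identically for every off-path component. Hence $\ket{Q_t}$ evolves identically for all good branches. Any damping weight from excited off-path sites is likewise common to all of them, and I will absorb it into the shared $\ket{Q_0}$ (or into $c$ uniformly). If no fault touches an idle site, these rules keep $\ket{Q_t}$ equal to the all-$\ket{W}$ configuration.

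\textbf{Conclusion.} At $t=T-1$ uncomputation has returned the path to idle, and the data register holds $j\oplus d_i$ because the gates acted ideally. This gives Eq.~\eqref{eq:qutritgood} with $c=\sum_t e_t$. The ratio of two such amplitudes is $a^{\Delta c}$, and since the state is exactly a scalar times the ideal output tensored with a shared $\ket{Q_0}$, the in-branch deviation is zero.

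\textbf{Main obstacle.} The hard step is the claim that $\ket{Q_t}$ is truly shared and never entangles with the path. This needs a careful case check that no gate on a good path is controlled by a faulted off-path site, and that no excitation created off-path propagates onto the path or across into another good branch's region. I would first verify this against the qutrit gate set, using the $\ket{W}$-guard property of each routing gate (a controlled operation whose control is $\ket{W}$ acts as identity). The difficult part will be shift-type faults that reach $\ket{W}$, and I would treat them by induction on subtree depth.
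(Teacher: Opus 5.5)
Your proposal is correct and follows the paper's own argument. The non-branching permutation schedule fixes a unique orbit, and the diagonal $K_0^{\mathrm{tri}}$ with fixed point $\ket{W}$ contributes only one factor $a$ per excited slot per slice. The computational-basis CNOT fetch keeps the bus in a definite word, and off-path faults act identically on every good component, so the post-uncomputation tree configuration is shared. Your slice-by-slice induction and your explicit handling of the common damping factor from fault-induced off-path excitations make rigorous what the paper states in a few lines, and like the paper they rest on the configuration-uniformity facts of Sec.~\ref{sec:structure}. One tidy-up: the set of off-path sites depends on $i$, so state the shared part of the invariant on the union of faulted subtrees, with every other off-path site in $\ket{W}$.
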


\begin{proof}
Since the branch is good, no sampled fault lies on its path, so only the no-jump operator $K_0^{\mathrm{tri}}$ of Eq.~\eqref{eq:kraus-tri} acts on its support. Being diagonal, $K_0^{\mathrm{tri}}$ cannot alter the unique basis orbit, which the non-branching schedule fixes in advance. It multiplies the branch amplitude by one factor of $a$ per excited slot per slice, while slots in the wait state are unattenuated. Since the fetch is a computational-basis CNOT, the data bus never leaves the computational basis and no wall-induced splitting occurs, so the data register ends in the definite word $j\oplus d_i$. Summing the per-slice excitation counts yields the schedule quantity $c(i,j)$. After uncomputation the tree returns to a configuration determined by the sampled faults; a fault off the routing path acts identically on every off-path component, so the configuration is one and the same for all good branches, and it is the idle all-$\ket{W}$ configuration when no sampled fault touches an idle site. This proves Eq.~\eqref{eq:qutritgood}. After normalization the branch coincides with the ideal output exactly.
\end{proof}

Equation~\eqref{eq:qutritgood} says for the damping what Ref.~\cite{wang2025qram} already uses for reliable branches. Every good branch ends in one and the same tree configuration $\ket{Q_0}$, and differs from the ideal output only by the scalar $a^{c}$. We call this the \emph{shared-final-state property} of reliable branches. The end state of a good branch is its ideal output times one common tree configuration, and only the scalar remains branch-dependent. The exponent $c$ does depend on the branch, since a data-slot excitation is present on a given leg only for particular input and output bit values. It is fixed by the schedule alone, so it can be read off without evolving the register. That is all the pruning algorithm needs.

\subsection{Qubit-encoded QRAM}
\label{sec:gap}

Proposition~\ref{prop:qutrit} rests on three simple properties of the qutrit circuit. The gates never branch, so every branch follows one fixed orbit through the tree. The damping operator $K_0$ is diagonal in the basis in which each traveling qubit propagates. Idle nodes sit in the wait state $\ket{W}$, which the flip-type faults cannot touch and which the shift-type faults disturb identically in every off-path component. In the qubit encoding the orbit is still unique, and damping of the address qubits is still a computable scalar. Routing qubits excited to $\ket{1}$ are damped at every slice they persist, with windows read off the schedule. The relative weight between good branches is again $a^{2\Delta c_{\mathrm{addr}}}$, with the schedule-computable integer $\Delta c_{\mathrm{addr}}$ given in Eq.~\eqref{eq:caddr}.

The wait-state protection is what fails first, under $X$-type depolarizing faults. A qubit node has no wait level. A pointer value of $0$, meaning route left, and an idle node are therefore indistinguishable, and every node's controlled swap fires. An $X$ flip therefore leaves a permanent stuck excitation in every superposition component, including components in which the fault node is off-path, where nothing ever cleans it. The stuck excitation then migrates. An idle parent always points left, so a stray excitation sitting in a left child is pulled up into the parent's data slot whenever the parent's swap fires. A parent holding $0$ thereby propagates the stray. A stray in a right child is never touched by an idle parent, since only a parent holding $1$ swaps with its right child. It is instead pushed down the leftmost chain of its own subtree. Output damage itself remains confined to the fault node's subtree in both encodings. What differs is the final configuration of off-path components. Components whose strays stop at different heights form distinct configuration families, and the end-of-trajectory measurement of the tree couples them. No single family can be predicted while another survives. The family-divergence envelope, illustrated in Fig.~\ref{fig:concept}, is exactly
\begin{equation}
\mathrm{bad}(v)=
\begin{cases}
\mathrm{subtree}(v), & \text{right child or root's child},\\[2pt]
\mathrm{bad}(\mathrm{parent}(v)), & \text{left child},
\end{cases}
\label{eq:badrange}
\end{equation}
with $\mathrm{bad}(\mathrm{root})$ the whole leaf range. The recursion is the climb itself: a stray deposited at a left child is pulled up through the chain of idle left-pointing ancestors until it reaches a right-child ancestor, and that ancestor's subtree contains every position the stray can come to rest in. The root always holds the first address bit, so a stray at its left child is never pulled up for off-path components, and the root's children serve as terminal cases. Branches inside the marked range need not be damaged; they have merely lost the guarantee that their final tree configuration coincides with the reference branch's, so the simulator evolves them explicitly instead of predicting them. We verified Eq.~\eqref{eq:badrange} exhaustively by single-fault injection on an $n=3$ tree, against a full-evolution reference simulator, covering all $644$ combinations of fault position, time slice, and fault type with zero violations. The details are described in Sec.~\ref{sec:results}. The injection covers single faults; for several faults the rule marks the union of the per-fault ranges, and the multi-fault consequences are covered by the trajectory-level tests of Sec.~\ref{sec:results}, though no general multi-fault theorem is claimed here. For every fault position, slot, and time slice, the wrong-bus set is contained in $\mathrm{subtree}(v)$ and the configuration families diverge precisely up to the boundary of Eq.~\eqref{eq:badrange}. The qutrit circuit under the same injections keeps every off-path component configuration-identical. The rule is applied uniformly to every sampled fault, whatever its type. Amplitude damping cannot create strays of its own: a jump requires an excitation, and idle qubits have none, so the damage of a pure damping fault stays inside its own subtree, and for such faults the uniform rule marks a superset. The extra marked branches are simply evolved explicitly, which keeps the criterion type-independent and the pruned evolution exact. The data-bus results of Sec.~\ref{sec:theory} hold for every branch whose path avoids the sampled faults.

What finally fails, however, is the diagonal damping of the data bus itself. Between the two Hadamard walls every data qubit propagates as
\begin{equation}
\ket{\pm}=\tfrac{1}{\sqrt2}\left(\ket{0}\pm\ket{1}\right),
\qquad
K_0\ket{\pm}=\tfrac{1}{\sqrt2}\left(\ket{0}\pm a\ket{1}\right),
\end{equation}
which leaves the propagation axis. The two components of the traveling qubit acquire different attenuations, the two walls no longer cancel, and the output of the second wall is not a definite computational value. A scalar predictor cannot represent this, and a naive simulator must split and re-merge the branch at every wall.

\begin{figure}[t]
\centering
\includegraphics[width=\columnwidth]{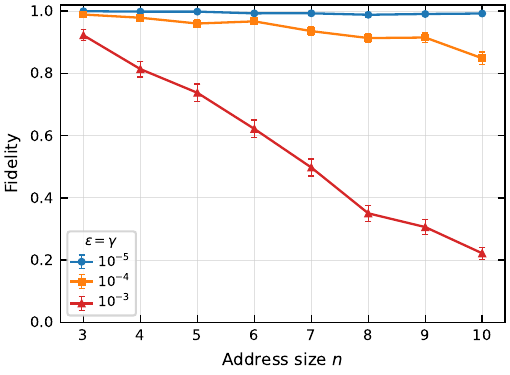}
\caption{Trajectory-averaged fidelity of the qubit encoding versus address size $n$, for $\varepsilon=\gamma$ from $10^{-5}$ to $10^{-3}$, with $200$ trajectories per point, $k=3$, and $500$ input branches; error bars show the standard error of the mean over the trajectories.}
\label{fig:fidelity}
\end{figure}

\section{Closed-form prediction of the $H\to K_0^d\to H$ structure}
\label{sec:theory}

\subsection{Single-qubit lemma}

\begin{lemma}[$H$--$K_0$--$H$ lemma]
\label{lem:hkh}
Let $a=\sqrt{1-\gamma}$, $d\in\mathbb{N}$, $m\in\{0,1\}$, and
\begin{equation}
N_{m,d} \;=\; H\,Z^{m}\diag(1,a^{d})\,H .
\end{equation}
Then for every input bit $b\in\{0,1\}$, writing $b_{\mathrm{out}}=b\oplus m$,
\begin{align}
N_{m,d}\ket{b} &= \alpha_d\,\ket{b_{\mathrm{out}}} + \beta_d\,\ket{\overline{b}_{\mathrm{out}}},
\label{eq:lemma}\\
\alpha_d&=\frac{1+a^{d}}{2},\qquad \beta_d=\frac{1-a^{d}}{2}.\nonumber
\end{align}
\end{lemma}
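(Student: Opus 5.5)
The plan is a direct computation in the Hadamard basis, organised so that the case split over $(b,m)$ collapses into one line. First I would rewrite the middle factor $Z^{m}\diag(1,a^{d})$ as the diagonal matrix $\diag\bigl(1,(-1)^{m}a^{d}\bigr)$. Since $Z$ and $\diag(1,a^{d})$ are both diagonal, they commute. I would then apply the first Hadamard to $\ket{b}$, which gives $H\ket{b}=\tfrac{1}{\sqrt2}\bigl(\ket{0}+(-1)^{b}\ket{1}\bigr)$. Acting with the diagonal factor produces the tilted state
\begin{equation*}
\tfrac{1}{\sqrt2}\left(\ket{0}+(-1)^{b\oplus m}a^{d}\ket{1}\right),
\end{equation*}
where I use $(-1)^{b}(-1)^{m}=(-1)^{b\oplus m}=(-1)^{b_{\mathrm{out}}}$. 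This is exactly the state announced in the introduction.

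The key step is a decomposition before applying the second Hadamard. I would split the tilted state into a symmetric part and an antisymmetric part relative to the sign $(-1)^{b_{\mathrm{out}}}$:
\begin{equation*}
\tfrac{1}{\sqrt2}\bigl(\ket{0}+s\,a^{d}\ket{1}\bigr)
=\tfrac{1+a^{d}}{2}\,\tfrac{1}{\sqrt2}\bigl(\ket{0}+s\ket{1}\bigr)
+\tfrac{1-a^{d}}{2}\,\tfrac{1}{\sqrt2}\bigl(\ket{0}-s\ket{1}\bigr),
\end{equation*}
with $s=(-1)^{b_{\mathrm{out}}}$. This identity can be checked coefficientwise: the $\ket{0}$ coefficients sum to $1$, and the $\ket{1}$ coefficients give $s\,a^{d}$.

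The two bracketed vectors are $H\ket{b_{\mathrm{out}}}$ and $H\ket{\overline{b}_{\mathrm{out}}}$. Because $H^{2}=\mathds{1}$, applying the second Hadamard yields $\alpha_d\ket{b_{\mathrm{out}}}+\beta_d\ket{\overline{b}_{\mathrm{out}}}$, which is Eq.~\eqref{eq:lemma}.

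There is no genuine obstacle here. The only point needing care is sign bookkeeping. Merging $Z^{m}$ into the diagonal makes the result depend on $b$ and $m$ only through $b\oplus m$, so no four-way case analysis is needed. Choosing the symmetric/antisymmetric split instead of multiplying out $H$ entrywise is what makes the coefficients $\alpha_d,\beta_d$ appear directly, independent of $b$. As a sanity check, setting $d=0$ gives $\alpha_0=1$ and $\beta_0=0$, which recovers the noiseless $HZ^{m}H=X^{m}$.
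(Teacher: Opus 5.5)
Your proof is correct and follows the paper's argument. Both merge $Z^{m}$ into $\diag(1,(-1)^{m}a^{d})$, apply it to $H\ket{b}$ to reach the tilted state $(\ket{0}+(-1)^{b\oplus m}a^{d}\ket{1})/\sqrt2$, and then apply the second Hadamard; the only cosmetic difference is that you expand the tilted state in the basis $H\ket{b_{\mathrm{out}}},\,H\ket{\overline{b}_{\mathrm{out}}}$ and invoke $H^{2}=\mathds{1}$, whereas the paper multiplies out $H$ directly and collects the $\ket{0},\ket{1}$ coefficients by $b_{\mathrm{out}}$.
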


\begin{proof}
$H\ket{b}=(\ket{0}+(-1)^{b}\ket{1})/\sqrt2$. Between the walls the data qubit experiences only $d$ applications of $K_0$, which merely relocate it through the routing SWAPs, and the CZ fetch, a $Z^m$ phase kick. Both are diagonal in the computational basis and commute, so they merge into $Z^m\diag(1,a^d)=\diag(1,(-1)^m a^d)$. The state before the second wall is $\left(\ket{0}+(-1)^{b\oplus m}a^{d}\ket{1}\right)/\sqrt2$. Applying $H$ gives $\tfrac12\left[(1+(-1)^{b\oplus m}a^{d})\ket{0}+(1-(-1)^{b\oplus m}a^{d})\ket{1}\right]$, and collecting by $b_{\mathrm{out}}=b\oplus m$ yields Eq.~\eqref{eq:lemma}.
\end{proof}

The state in Eq.~\eqref{eq:lemma} is not normalized, and the missing weight has an exact meaning. Between the two walls the qubit is in the state $(\ket{0}+(-1)^{b\oplus m}a^{d}\ket{1})/\sqrt{2}$, whose squared norm $\alpha_d^2+\beta_d^2=(1+a^{2d})/2$ is precisely the probability of crossing all $d$ damping layers without a jump. The complementary weight $(1-a^{2d})/2$ is the probability that at least one jump occurs, and it is carried by the bad branches, which the simulator evolves explicitly. No probability is lost. The norm of the no-jump branch plus the weight of the branches with jumps always sums to one. The limiting cases check out as well. At $a=1$, i.e.\ no damping, $\beta_d$ vanishes and the fetch is ideal, the two Hadamard walls cancelling since $H^2=\mathds{1}$. At $d=0$ the qubit crosses no damping layer, $\diag(1,a^{0})=\mathds{1}$, and the map again reduces to the ideal fetch.

\subsection{Multi-qubit product structure}

\begin{theorem}[$k$-qubit output formula]
\label{thm:multi}
Consider a good branch with address $i$, bus input $j\in\{0,1\}^k$, stored word $d=d_i$, ideal output $b_{\mathrm{out}}=j\oplus d$; for a bit string $x$, $\Ham(x)$ denotes its Hamming weight. In the qubit encoding each data qubit resides in the tree for exactly the $2n$ damping layers of Eq.~\eqref{eq:window}, independently of everything else, and different data qubits share no gates. Hence the data register after the second wall is
\begin{equation}
\begin{aligned}
\bigotimes_{t=0}^{k-1} N_{d^{(t)},\,2n}\,\ket{j_t}
&=\sum_{c\in\{0,1\}^k} A(c)\,\ket{c},\\
A(c)&=\alpha_{2n}^{\,k-w}\,\beta_{2n}^{\,w},\\
w&=\Ham\!\left(c\oplus b_{\mathrm{out}}\right),
\end{aligned}
\label{eq:hamming}
\end{equation}
i.e.\ the amplitude of an output word $c$ depends only on the Hamming weight $w$ of its error pattern relative to the ideal output.
\end{theorem}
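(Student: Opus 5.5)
The plan is to reduce the $k$-qubit statement to $k$ independent applications of Lemma~\ref{lem:hkh}, and then expand the resulting tensor product.

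\textbf{Factorization.} First I would argue that, on a good branch and conditional on no jump, the map between the two walls factorizes over data qubits. The walls $H^{\otimes k}$ are tensor products by construction. Between them, data qubit $t$ is moved only by routing SWAPs and controlled-SWAPs. On a good branch these are controlled by address qubits sitting in definite computational-basis values along the unique orbit. So each such gate is a fixed permutation of data slots: it relocates qubit $t$ without entangling it with any other data qubit.

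The remaining operations on qubit $t$ are of two kinds. There is the CZ fetch against the classical cell, which acts as $Z^{d^{(t)}}$ on that qubit alone. There is also one factor $K_0=\diag(1,a)$ per damping layer during which it sits in a tree slot. All of these are diagonal in the computational basis, so they commute and merge as in the proof of Lemma~\ref{lem:hkh}.

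\textbf{Exposure count.} Next I would pin down the exponent $d$. By Eq.~\eqref{eq:schedule} and Eq.~\eqref{eq:window}, qubit $t$ enters the root slot at $\tau_{\mathrm{in}}(t)$ and leaves at $\tau_{\mathrm{out}}(t)$. Damping acts on tree qubits once per slice, after the gates, so the qubit sees exactly $2n$ damping layers. This number is independent of $i$, $j$ and $d$. The one point to check carefully is the boundary convention. The layer at $\tau_{\mathrm{in}}$ counts, because the qubit is already in the tree after CopyIn. The layer at $\tau_{\mathrm{out}}$ does not, because CopyOut precedes that slice's noise and the bus qubits are not damped. I expect this bookkeeping against the schedule to be the main obstacle. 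I would settle it by listing the residence slices $\tau_{\mathrm{in}},\dots,\tau_{\mathrm{out}}-1$ explicitly.

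Address-qubit damping contributes only a scalar, the address-exposure factor, which does not act on the data register. The tree's final configuration is common to all branches and factors out. With these two facts in hand, the data register after the second wall is $\bigotimes_t N_{d^{(t)},2n}\ket{j_t}$.

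\textbf{Expansion.} Finally I would apply Lemma~\ref{lem:hkh} to each factor. This gives $N_{d^{(t)},2n}\ket{j_t}=\alpha_{2n}\ket{b_{\mathrm{out},t}}+\beta_{2n}\ket{\overline{b}_{\mathrm{out},t}}$. Expanding the tensor product, each output word $c$ arises from exactly one choice per qubit: $\alpha_{2n}$ where $c_t=b_{\mathrm{out},t}$, and $\beta_{2n}$ where they differ. The coefficient of $\ket{c}$ is therefore $\alpha_{2n}^{k-w}\beta_{2n}^{w}$ with $w=\Ham(c\oplus b_{\mathrm{out}})$, which is Eq.~\eqref{eq:hamming}.
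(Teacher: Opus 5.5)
Your proposal is correct and follows essentially the paper's argument: the walls bracket a product evolution, since routing relocates data qubits without coupling them and the $Z^{d^{(t)}}$ kicks and $K_0$ factors are per-qubit diagonal maps; Lemma~\ref{lem:hkh} is then applied qubit by qubit and output words are collected by the Hamming weight of $c\oplus b_{\mathrm{out}}$. Your explicit count of the residence slices $\tau_{\mathrm{in}},\dots,\tau_{\mathrm{out}}-1$ is bookkeeping the paper does not carry out, since it takes the $2n$ exposure directly from Eq.~\eqref{eq:window} as part of the theorem's hypothesis.
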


\begin{proof}
The two walls bracket a product evolution. Routing layers relocate qubits but never couple two data qubits, the $Z^{d^{(t)}}$ kicks act on distinct qubits, and the no-jump operator $K_0^{\otimes}$ factors into per-qubit diagonal maps, each acting on one factor of a product state. Apply Lemma~\ref{lem:hkh} per qubit and collect output words by the number of flipped bits.
\end{proof}

Note the contrast with the qutrit scheme, where the branch dependence of the exposure window forces the per-branch count $c(i,j)$ of Proposition~\ref{prop:qutrit}. In the qubit scheme the $X$-basis transport makes the window universal, equal to $2n$ always, which simplifies rather than complicates the counting.

\subsection{Address part and the good-branch state}

The routing excitations persist for the whole query in the qubit encoding, since there is no wait level to relax into. The schedule gives the exposure count
\begin{align}
c_{\mathrm{addr}}(i)&=\sum_{t\,:\,i_t=1}\left(T-2(2t+1)\right)\nonumber\\
&\;=\;\sum_{t\,:\,i_t=1}\left(6n+2k-4t-2\right),
\label{eq:caddr}
\end{align}
so that the relative probability weight of good branch $i$ with respect to the reference branch $i_{\mathrm{ref}}$ is
\begin{equation}
\frac{p_i}{p_{i_{\mathrm{ref}}}}=(1-\gamma)^{\,c_{\mathrm{addr}}(i)-c_{\mathrm{addr}}(i_{\mathrm{ref}})} .
\label{eq:relmult}
\end{equation}

\begin{theorem}[good-branch closed form]
\label{thm:goodbranch}
For a sampled noise history, let branch $(i,j)$ be good, i.e.\ its routing path avoids every sampled fault. Then its no-jump trajectory end state is
\begin{equation}
\ket{\Psi^{\good}_{i,j}}
=\;a^{\,c_{\mathrm{addr}}(i)}\,\ket{i}_A\otimes
\left[\bigotimes_{t=0}^{k-1}N_{d^{(t)},2n}\ket{j_t}\right]_D\otimes
\ket{\tilde Q_0}_{\mathrm{tree}},
\label{eq:goodbranch}
\end{equation}
where $\ket{\tilde Q_0}$ is the unnormalized remnant of the tree state, identical for all good branches.
\end{theorem}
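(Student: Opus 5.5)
We will prove Theorem~\ref{thm:goodbranch} by following the input basis component $\ket{i}_A\ket{j}_D\ket{0\cdots0}_{\mathrm{tree}}$ (with the Hadamard-wall image of $\ket{j}$ on the bus) through the sampled no-jump trajectory. At each slice, the operator acting on the branch's support is the product of the scheduled gates, the sampled faults and $K_0^{\otimes M}$. Since the branch is good, no sampled fault lies on its routing path. By the marking rule of Eq.~\eqref{eq:badrange}, every sampled fault lies in a region whose strays produce the same configuration family for this component as for the reference branch. The plan is to write the tree register at every slice as a tensor product of three disjoint pieces, each evolved separately:
\begin{itemize}
\item the \emph{path} sites carrying the address qubits of $i$;
\item the sites carrying the traveling data qubits;
\item the off-path remainder, which absorbs all fault effects.
\end{itemize}

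The first step is the address part. Between the walls, the routing and address-setting layers are non-branching controlled-SWAPs, so they act on computational-basis address states as permutations. Hence $K_0$ on an address site holding a definite bit only contributes a scalar $a$ per excited slot per slice. Address bit $t$ with $i_t=1$ is excited from its insertion to its removal, i.e.\ for $T-2(2t+1)$ slices. Summing over $t$ gives $a^{c_{\mathrm{addr}}(i)}$ by Eq.~\eqref{eq:caddr}. The address register is restored to $\ket{i}_A$ at the end by the mirror symmetry of the schedule.

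The second step is the data part. Here we invoke the residence window, Eq.~\eqref{eq:window}: each data qubit sits in the tree for exactly $2n$ damping layers. Routing SWAPs only relocate it, and the CZ fetch applies $Z^{d^{(t)}}$. Because these maps are diagonal and therefore commute, the per-qubit evolution between the walls is $Z^{d^{(t)}}\diag(1,a^{2n})$. This holds even though the qubit is in a superposition, because SWAP-type gates act identically on both components and the routing controls are definite bits along the path. Theorem~\ref{thm:multi} then gives the bracketed factor, with no cross-coupling to the address part. Address qubits can be controls but are never swapped with data along the good path, so the joint state stays in product form.

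The third step, and the main obstacle, is the remainder $\ket{\tilde Q_0}$. We must show it is the same unnormalized vector for every good branch, including the no-jump attenuation of stuck strays. The approach is to argue that every off-path idle site experiences the identical sequence of operations in all good branches. Idle nodes point left in every good component. Each sampled fault's stray either stays in its subtree or climbs to its right-child ancestor, as in the family-divergence analysis, and that whole range is excluded from the good set. The path of a good branch therefore never enters a region where a stray lives or passes. The stray dynamics, together with the factors $a$ from $K_0$ acting on the stray excitations, are then independent of $(i,j)$. The delicate point is showing that the path never gates, or is gated by, a stray region, which would make the stray's motion branch-dependent. We expect to discharge this by induction over slices using the recursion~\eqref{eq:badrange}, with the single-fault case as the base. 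For multiple faults, we take the union of the marked ranges and note that disjoint fault regions evolve independently under the non-branching schedule. Finally, we take the tensor product of the three factors and collect the scalars, which yields Eq.~\eqref{eq:goodbranch}.
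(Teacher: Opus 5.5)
Your proposal takes the same route as the paper's proof. You split the no-jump evolution $K_0^{\otimes}$ of a good branch into three pieces: the address scalar $a^{c_{\mathrm{addr}}(i)}$ of Eq.~\eqref{eq:caddr}, the per-data-qubit wall-to-wall map of Lemma~\ref{lem:hkh} and Theorem~\ref{thm:multi}, and a tree remnant common to all good branches. Your treatment of the remnant is more explicit than the paper's one-line assertion, and it correctly rests on the enlarged marking rule of Eq.~\eqref{eq:badrange} rather than on bare path avoidance.

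One correction for the multi-fault step: per-fault marked ranges can nest, and strays from different faults can interact, for example when a flipped idle routing qubit redirects another stray. So do not claim that fault regions evolve independently. Instead argue that each maximal marked subtree is sealed off from every good component, because its root's parent never routes into it there; whatever happens inside that subtree is then identical for all good branches.
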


\begin{proof}
The reversible orbit is unique because the circuit never branches, and no sampled fault touches the path of a good branch, so between slices only $K_0^{\otimes}$ acts. Splitting $K_0^{\otimes}$ by qubit, routing qubits contribute the scalar of Eq.~\eqref{eq:caddr}, each data qubit contributes the wall-to-wall map of Lemma~\ref{lem:hkh}, and the tree returns through uncomputation to the common configuration $\ket{\tilde Q_0}$.
\end{proof}

Theorem~\ref{thm:goodbranch} generalizes Proposition~\ref{prop:qutrit}, the shared-final-state property of reliable qutrit branches, to the qubit encoding. The sharing survives. Only the data register is promoted from a definite word to the known two-component product state of Theorem~\ref{thm:multi}. The scalar $a^{c}$ is promoted to the address scalar together with the Hamming formula.

\begin{table*}[t]
\centering
\caption{Qutrit vs.\ qubit encoding in the pruning simulator.}
\label{tab:compare}
\begin{tabular}{lcc}
\toprule
Aspect & Qutrit & Qubit, this work \\
\midrule
Idle level & $\ket{W}$ fixed point of $K_0$ & None; $\ket{1}$ routers persist \\
Memory fetch & Computational-basis CNOT & CZ phase kickback, $H$ walls \\
Walls & Degenerate, merge only & Real split/merge \\
Data-qubit exposure & Basis-valued windows & Universal $2n$, value-independent \\
Good-branch data output & Definite $\ket{j\oplus d_i}$ & Known product, Eq.~\eqref{eq:hamming} \\
In-branch error & $0$, pure scalar & $\mathcal{O}(n^2\gamma^2)$ leakage / qubit \\
Address exposure & $\mathcal{O}(t)$ per bit, transient & $\sim T$ per bit, persistent \\
$X$-fault residue & Frozen at fault node by the $W$ guard & Migrates up idle ancestors \\
Bad range & $\mathrm{subtree}(v)$ & Eq.~\eqref{eq:badrange}, left child $\to$ nearest right-child ancestor \\
Predictor & Scalar $a^{\Delta c}$ & Scalar $\times$ Hamming formula \\
\bottomrule
\end{tabular}
\end{table*}

Table~\ref{tab:compare} summarizes the structural origins of the contrast between the two encodings, aspect by aspect. They include the presence of a wait level, the basis of the memory fetch, the persistence of routing excitations, and the pruning criteria that follow. The in-branch error row is quantified in Sec.~\ref{sec:errors}.

\subsection{Exact joint sampling of damping}
\label{sec:jumps}

Theorem~\ref{thm:goodbranch} makes the replacement exact on the no-jump side. The jump side needs a sampler whose outcome distribution is exact on the full multi-qubit state. This subsection defines one and proves its exactness, then shows that the pruned mode evaluates it from the explicitly represented branches and the predicted scalars alone.

Consider one damping layer, acting on the $M=2(2^{n}-1)$ tree qubits of Sec.~\ref{sec:noise}, and write its input as $\ket{\psi}=\sum_x c_x\ket{x}$ in the computational basis, where $x$ records the tree sites together with all spectator registers; spectator labels are marginalized automatically. Let $X(x)$ denote the set of tree sites excited in configuration $x$. For a set $J$ of sites the layer Kraus operator is
\begin{equation}
K_J=\prod_{q\in J}K_{1,q}\prod_{q\notin J}K_{0,q},
\label{eq:layerkraus}
\end{equation}
with both products over the tree sites; operators on different sites commute. A shot resolves the layer in four steps. First, draw a candidate set $C$, including each tree site independently with probability $\gamma$; this draw is state-independent and can be made in advance for all layers. Second, draw one auxiliary configuration $x$ with probability $|c_x|^{2}$ from the current state. Third, set the joint jump set to $J=C\cap X(x)$. Fourth, apply $K_J$ to the original coherent state $\ket{\psi}$ and normalize the conditional state. The auxiliary draw only selects the environment outcome; the simulated state is never replaced by $\ket{x}$.

\begin{theorem}[Exact joint damping sampling]
\label{thm:joint}
For independent amplitude damping over the domain of Sec.~\ref{sec:noise}, the sampler above produces the exact Kraus-outcome distribution $p_J=\bra{\psi}K_J^{\dagger}K_J\ket{\psi}$ and the exact channel average, for every normalized pure input and every $\gamma<1$.
\end{theorem}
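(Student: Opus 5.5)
The plan is to compute the probability that the four-step sampler outputs a given jump set $J$ and to check directly that it equals $p_J=\bra{\psi}K_J^{\dagger}K_J\ket{\psi}$. The channel statement then follows because the sampler returns exactly $K_J\ket{\psi}/\sqrt{p_J}$ on outcome $J$.

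\emph{Sampler side.} The draws of $C$ and $x$ are independent. Given $x$, the event $C\cap X(x)=J$ requires $J\subseteq X(x)$. It also requires that every site of $J$ be in $C$, with probability $\gamma^{|J|}$, and that every site of $X(x)\setminus J$ be outside $C$, with probability $(1-\gamma)^{|X(x)\setminus J|}$. Sites outside $X(x)$ are unconstrained. Hence
\begin{equation*}
\Pr[J]=\sum_x |c_x|^2\,[J\subseteq X(x)]\,\gamma^{|J|}(1-\gamma)^{|X(x)\setminus J|}.
\end{equation*}

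\emph{Kraus side.} Work basis state by basis state using Eq.~\eqref{eq:layerkraus}. On $\ket{x}$, $K_{1,q}$ annihilates the state unless site $q$ is excited. When $q$ is excited, $K_{1,q}$ lowers it with factor $\sqrt\gamma$. The factor $K_{0,q}$ gives $a$ on an excited site and $1$ on an unexcited one. Thus $K_J\ket{x}=0$ unless $J\subseteq X(x)$, and otherwise
\begin{equation*}
K_J\ket{x}=\gamma^{|J|/2}a^{|X(x)\setminus J|}\ket{x^{(J)}},
\end{equation*}
where $x^{(J)}$ is $x$ with the sites of $J$ reset to $0$ and spectator labels unchanged.

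The key step, and the only one needing care, is orthogonality of the images. For fixed $J$, the map $x\mapsto x^{(J)}$ is injective on $\{x: J\subseteq X(x)\}$, because $x$ is recovered by setting the $J$ sites back to $1$. So distinct $x$ land on distinct, hence orthogonal, basis states, and the cross terms in $\|K_J\ket\psi\|^2$ vanish. With $a^2=1-\gamma$ this gives $p_J=\Pr[J]$ term by term. Spectator registers ride along unchanged, which is why they marginalize automatically.

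\emph{Channel average.} On outcome $J$ the sampler outputs $K_J\ket\psi/\sqrt{p_J}$, so the average is
\begin{equation*}
\sum_J p_J\,\frac{K_J\ket\psi\bra\psi K_J^\dagger}{p_J}=\sum_J K_J\ket\psi\bra\psi K_J^\dagger,
\end{equation*}
where outcomes with $p_J=0$ never occur and drop out. This is exactly the product amplitude-damping channel on all tree sites. It extends to mixed inputs by linearity over trajectories. Completeness, $\sum_J K_J^\dagger K_J=\prod_q(K_{0,q}^\dagger K_{0,q}+K_{1,q}^\dagger K_{1,q})=\mathds 1$, is consistent with $\sum_J\Pr[J]=1$. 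The hypothesis $\gamma<1$ guarantees $a>0$, so the no-jump factors never annihilate a component spuriously, and the normalization in step four is well defined whenever $J$ is actually drawn.
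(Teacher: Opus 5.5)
Your proof is correct and follows the paper's argument. It derives the same conditional law $\Pr(J\mid x)=\mathbf{1}_{J\subseteq X(x)}\gamma^{|J|}(1-\gamma)^{|X(x)|-|J|}$, marginalizes it over the auxiliary draw to obtain $\bra{\psi}K_J^{\dagger}K_J\ket{\psi}$, and ends with the same cancellation of $p_J$ in the channel average. Your injectivity of $x\mapsto x^{(J)}$ is simply an explicit form of the paper's observation that $K_J^{\dagger}K_J$ is diagonal in the computational basis.
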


\begin{proof}
Conditioned on the auxiliary draw $x$, only sites in $X(x)$ can enter $J$, and the candidate choices at the complementary sites cancel from the normalization, giving
\begin{equation}
\Pr(J\mid x)=\mathbf{1}_{J\subseteq X(x)}\,\gamma^{|J|}(1-\gamma)^{|X(x)|-|J|}.
\end{equation}
Marginalizing over the auxiliary draw, and using that $K_J^{\dagger}K_J$ is diagonal in the computational basis with the same factor as its value on $\ket{x}$,
\begin{align}
\Pr(J)&=\sum_{x:\,J\subseteq X(x)}|c_x|^{2}\,\gamma^{|J|}(1-\gamma)^{|X(x)|-|J|}\nonumber\\
&=\bra{\psi}K_J^{\dagger}K_J\ket{\psi}.
\end{align}
The conditional state of an outcome with $p_J>0$ is $K_J\ket{\psi}/\sqrt{p_J}$, so the shot average is
\begin{equation}
\sum_{J:\,p_J>0}p_J\,\frac{K_J\ket{\psi}\bra{\psi}K_J^{\dagger}}{p_J}
=\sum_{J}K_J\ket{\psi}\bra{\psi}K_J^{\dagger},
\end{equation}
which is the channel of Eq.~\eqref{eq:kraus} applied independently to every tree site. The identity holds at finite $\gamma$ and involves neither a single-jump approximation nor a first-order time discretization.
\end{proof}

The pruned mode must draw the auxiliary configuration from the complete state, including the predicted branches: evaluated on the explicitly represented branches alone, the jump statistics would miss the predicted weight and the trajectory ensemble would be biased. The candidate set is state-independent and shared, so what this summation must supply is the mask weight
\begin{equation}
w(J)=\sum_{x:\,C\cap X(x)=J}|c_x|^{2}
\label{eq:maskweight}
\end{equation}
of the whole represented state; one weighted draw over these weights is equivalent to drawing $x$ and intersecting it with $C$. The required object is therefore the joint distribution of masks across all candidate sites, not merely the one-site excited populations. It is available without evolution. By Theorem~\ref{thm:goodbranch}, the components of every predicted good group are in one-to-one correspondence with the reference branch's components: the mirrored component's tree configuration follows the branch's own deterministic orbit, and its amplitude is the reference component's amplitude times known analytic factors, the address-exposure ratio of Eq.~\eqref{eq:relmult} and the wall-to-wall data factors. Its excitation set, and hence its contribution to $w(J)$, can therefore be evaluated directly. Summing these contributions over the predicted groups, on top of the explicitly represented bad branches and the reference, reproduces the mask-weight map that the full mode would compute. One auxiliary weighted draw over the summed weights then selects the same joint outcome in both modes, and $K_J$ is applied to the represented components. If the reference branch carries no surviving component, its mask distribution is empty and the predicted groups contribute zero weight, in agreement with the reconstruction rule of Sec.~\ref{sec:algorithm}. The joint agreement of the summed weights with the full-mode map is verified by the multi-candidate regression tests of Sec.~\ref{sec:results}.

The sampler, like the pruning algorithm as a whole, resolves the input into branches labeled by an address and one computational-basis bus word. Branches that share an address but differ in the bus word act as orthogonal input columns: their probabilities enter the weights and the outcome statistics, and the bookkeeping does not track interference between different columns. The established input class is thus a coherent superposition over addresses with one bus-input column per address, which includes the data-loading states used throughout, and arbitrary superpositions within a single branch are covered by the local-channel tests of Sec.~\ref{sec:results}. General coherent superpositions across several bus-input columns of one address require a coherent sum of the corresponding amplitudes and are outside the scope of the present simulator.

\subsection{Norm loss versus coherent leakage}
\label{sec:errors}

The closed form serves a dual purpose. It not only underlies the efficient simulation algorithm of Sec.~\ref{sec:algorithm} but also explains how the error of the qubit encoding differs from that of the qutrit encoding. Here, we determine which deviations of the predicted state are genuine errors, and at which order. Expanding per data qubit with $d=2n$,
\begin{align}
\alpha_{2n}&=\frac{1+(1-\gamma)^{n}}{2}=1-\frac{n\gamma}{2}+\mathcal{O}(\gamma^2),\nonumber\\
\beta_{2n}&=\frac{1-(1-\gamma)^{n}}{2}=\frac{n\gamma}{2}+\mathcal{O}(\gamma^2).
\end{align}
Two notions must be separated, and the first-order expansion above is what makes the separation visible. The predicted state enters the error analysis twice, with different physical meaning. Its norm $\alpha^2+\beta^2$ measures the probability weight handed over to the jump trajectories. Its shape after normalization, $\beta^2/(\alpha^2+\beta^2)$, measures the distortion remaining within a no-jump trajectory. The first quantity is first order in $\gamma$ but is not an error of the prediction. The second is the true in-branch error and is second order.

The first notion is norm loss, which is first order and not an error. The no-jump probability per data qubit is $\alpha^2+\beta^2=(1+a^{4n})/2=1-n\gamma+\mathcal{O}(\gamma^2)$. The lost weight is the probability mass of the jump trajectories, carried by the explicitly simulated bad branches. The shrinking unnormalized norm of a good branch is not an output error.

The second notion is the in-branch distortion, which we call coherent leakage: it is second order in the conditional infidelity, and it involves no escape from the computational subspace. Conditional on no jump, the per-qubit error probability is
\begin{equation}
\frac{\beta_{2n}^2}{\alpha_{2n}^2+\beta_{2n}^2}
=\frac{n^2\gamma^2}{4}+\mathcal{O}(\gamma^3).
\end{equation}
Hence the conditional infidelity of a good branch---the infidelity of its normalized no-jump state against the ideal output---is second order in $\gamma$, while the trace distance of the same states, $\sqrt{q}=\mathcal{O}(n\gamma)$, remains first order: the error amplitude is first order and only its probability is second order. All first-order infidelity of the channel average lives on the jump trajectories. This is consistent with the channel-averaged single-qubit error $\frac{1-a^{2n}}{2}$, which decomposes exactly as
\begin{equation}
\frac{1-a^{2n}}{2}=\frac{1-a^{4n}}{4}+\frac{(1-a^{2n})^{2}}{4},
\end{equation}
a first-order jump contribution plus a second-order no-jump distortion, the latter being the coherent leakage above. For error-filtration applications this means that postselecting or conditioning on no-jump trajectories leaves a conditional infidelity of $\mathcal{O}(n^2\gamma^2)$ per data qubit.

\begin{figure*}[t]
\centering
\includegraphics[width=\textwidth]{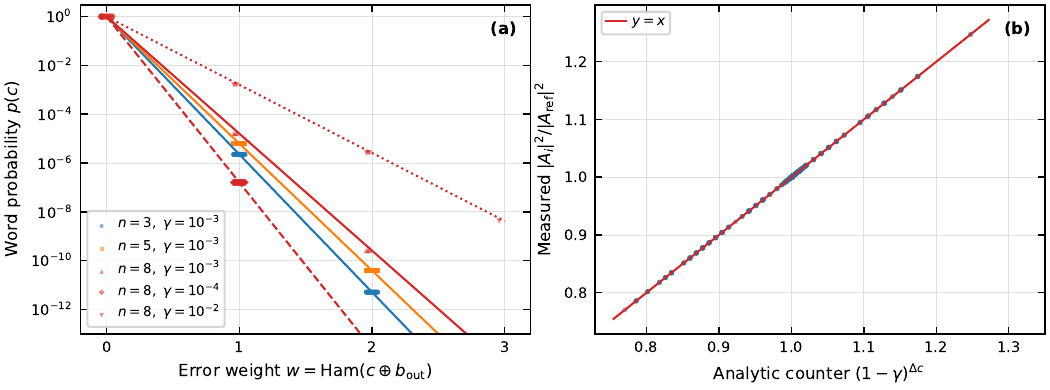}
\caption{Targeted verification of the closed-form predictions under damping. (a)~Normalized output-word probabilities of the data register in no-jump trajectories of the qubit encoding, shown as points, against Eq.~\eqref{eq:hamming} shown as lines. Words are grouped by the Hamming weight $w=\Ham(c\oplus b_{\mathrm{out}})$ of the error pattern, and words of the same $w$ collapse onto one value. All words with theoretical probability down to $10^{-12}$ agree to floating-point accuracy. (b)~Qutrit encoding. Measured relative branch weights $|A_i|^2/|A_{\mathrm{ref}}|^2$ versus the analytic exposure counter $(1-\gamma)^{\Delta c}$ of Proposition~\ref{prop:qutrit}. The measured weights agree with the analytic counter to machine precision, and every good branch ends in the ideal output.}
\label{fig:verify_hamming}
\end{figure*}

\section{Algorithm and complexity}
\label{sec:algorithm}

\begin{algorithm}[t]
\caption{Pruned simulation of the qubit-encoded $(n,k)$-QRAM for one trajectory.}
\label{alg:prune}
\begin{algorithmic}[1]
\Require sparse state $\sum_{i,j} c_{i,j}\ket{i}\ket{j}$, memory $d$, noise $(\varepsilon,\gamma)$, $a=\sqrt{1-\gamma}$
\State sample the noise history over slices $1..T-1$ with $T=6n+2k$; mark each fault, whatever its type, into the bad-address union $\mathcal{R}$ by the configuration-divergence rule of Eq.~\eqref{eq:badrange}, which maps a left child to the subtree of its nearest right-child ancestor and a right child to its own subtree
\State partition: $i\in\mathcal{R}\Rightarrow\bad$; keep the first good branch as reference
\State evolve explicitly the bad branches $+$ the reference, with $H$ walls splitting and merging the bus, and one joint damping layer per slice as in Sec.~\ref{sec:jumps}
\State for each good branch $(i,j)$: address factor $r_i=(1-\gamma)^{c_{\mathrm{addr}}(i)-c_{\mathrm{addr}}(i_{\mathrm{ref}})}$ of Eq.~\eqref{eq:caddr}; data factors $\alpha=(1+a^{2n})/2$, $\beta=(1-a^{2n})/2$, $b_{\mathrm{out}}=j\oplus d_i$
\State emit components $c\in\{0,1\}^k$ with amplitude $c_{i,j}\sqrt{r_i}\,\alpha^{k-w}\beta^{w}$, $w=\Ham(c\oplus b_{\mathrm{out}})$
\State reconstruct the output sparse state; sample and normalize
\end{algorithmic}
\end{algorithm}

Algorithm~\ref{alg:prune} summarizes the complete simulator, and Fig.~\ref{fig:concept} illustrates its central idea. The noise history is sampled in advance. Each sampled fault, whatever its type, marks the bad region given by the family-divergence range of Eq.~\eqref{eq:badrange}. The bad branches and one reference good branch are evolved explicitly. Every remaining good branch is reconstructed from the reference branch without further evolution, with exactness guaranteed by Theorem~\ref{thm:goodbranch} of Sec.~\ref{sec:theory} on the no-jump side and by the joint sampler of Sec.~\ref{sec:jumps} on the jump side.

The expected number of bad branches follows from the size of the marking rule. A branch is marked once a fault falls at any node $v$ whose range $\mathrm{bad}(v)$ contains its address, so the marking sources of one address are the nodes $v$ with $i\in\mathrm{bad}(v)$. Under the recursion of Eq.~\eqref{eq:badrange}, a right child marks its own subtree while every left child inherits the range of its nearest right-child ancestor, and summing the range sizes over the tree gives the average count of marking sources per address,
\begin{equation}
\sum_{v}\frac{|\mathrm{bad}(v)|}{2^{n}}\;=\;n+\frac{(n-1)(n-2)}{4}\;=\;\Theta(n^{2}),
\label{eq:marksources}
\end{equation}
which replaces the $\mathcal{O}(n)$ count of the plain subtree rule. Each candidate node can be sampled faulty at any of the $T=\mathcal{O}(n)$ time slices, with per-element probability $p=\mathcal{O}(\gamma+\varepsilon)$, the rate of the noise channels of Sec.~\ref{sec:noise}. With $D$ input branches, the expected number of bad branches is therefore
\begin{equation}
\mathbb{E}[B]=\mathcal{O}(n^{3}p)\,D,
\label{eq:eb}
\end{equation}
where $B$ denotes the number of bad branches.

Table~\ref{tab:cost} collects the resulting per-branch costs and the savings of the pruned mode. The factor $2^{k}$ in the prediction cost is the width of the damped data register itself. By Eq.~\eqref{eq:hamming}, each good branch ends in a superposition over all $2^{k}$ output words, with amplitudes fixed by the Hamming weight alone. This width is physical rather than algorithmic, and the simulator never needs to store it. Any single amplitude can be evaluated on demand from Eq.~\eqref{eq:hamming}. If only low-weight error patterns are of interest, the sum can be truncated at weight $w_{\max}$ with residual error $\mathcal{O}((n\gamma)^{w_{\max}+1})$, since each unit of Hamming weight costs a factor $\beta/\alpha=\mathcal{O}(n\gamma)$ in amplitude. The remaining costs---sampling the noise history over the $2(2^{n}-1)$ tree sites, storing the classical memory, and producing the requested output---are common to the pruned and the full mode and are inherited from the reference simulator unchanged. The advantage of the pruning is therefore a single, cleanly separable replacement: $D$ explicit branch evolutions become $\mathbb{E}[B]+1$, and the resulting speedup is measured directly by the pruned-versus-full comparison of Sec.~\ref{sec:results}.

\begin{table*}[t]
\centering
\caption{Per-good-branch cost and the savings of the pruned mode.}
\label{tab:cost}
\begin{tabular}{lcc}
\toprule
 & Qutrit & Qubit, this work \\
\midrule
Explicitly evolved branches & $B+1$ (full: $D$) & $B+1$ (full: $D$) \\
Prediction cost / branch & $\mathcal{O}(n+k)$ & $\mathcal{O}(n+k)+\mathcal{O}(k\,2^{k})$ \\
Memory / branch & $\mathcal{O}(n)$ & $\mathcal{O}(n+k)$, product on demand \\
Shared by both modes & \multicolumn{2}{c}{noise sampling over the tree, classical memory, output} \\
\bottomrule
\end{tabular}
\end{table*}

\section{Numerical experiments}
\label{sec:results}

\begin{figure*}[t]
\centering
\includegraphics[width=\textwidth]{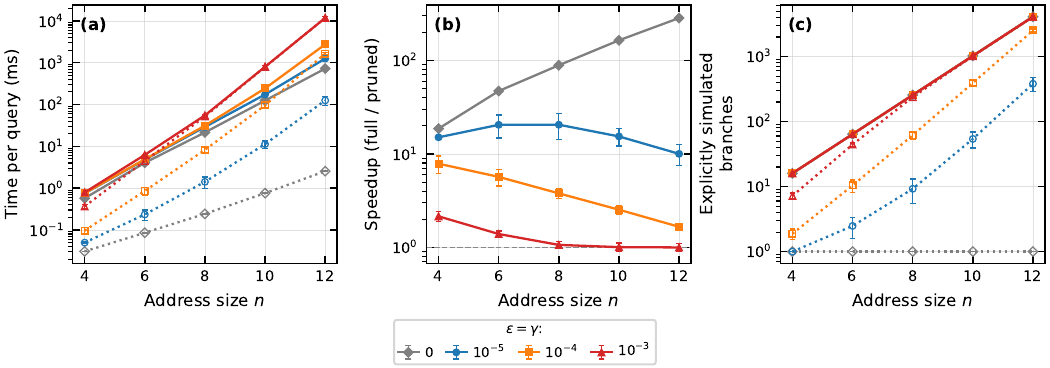}
\caption{Full versus pruned simulation of the qubit encoding with $k=3$ under the data-loading input of one branch per address with zero bus input, so that the full mode evolves exactly $D=2^n$ branches, averaged over $50$ trajectories. Error bars show the standard error of the mean over the trajectories. (a)~Wall-clock time per query. (b)~Pruned-over-full speedup, the ratio of the mean wall-clock times of the two modes. (c)~Explicitly simulated branches. All input branches in the full mode, shown solid, are compared with the bad branches plus the single reference branch in the pruned mode, shown dotted; the speedup in panel~(b) is the ratio of the two. At zero noise the pruned mode evolves exactly one branch, and the speedup rises to $285\times$ at $n=12$. At $\varepsilon=\gamma=10^{-5}$ the pruned mode evolves $3.6\%$--$9.3\%$ of the branches at speedups of $10$--$21\times$; at $10^{-3}$ almost every branch is bad and the speedup approaches unity.}
\label{fig:perf}
\end{figure*}

The algorithm is implemented in the sparse-state QRAM simulator framework~\cite{sparqsim2025}, with a branch-level qubit circuit whose pruned and full modes are driven by the same noise history. Whereas generic state-vector and trajectory simulators evolve the full register in every trajectory~\cite{johansson2012qutip, jones2019quest, chen2018sim, wang2021sunway}, this framework exploits the branch structure of the QRAM circuit, and the pruning layer removes even the good branches from explicit evolution. The implementation is validated as follows. For a fixed random seed, the two modes share the identical sampled noise history, jump decisions, and final collapse, so any discrepancy in the end-state fidelity exposes an implementation or prediction error. All runs use $k=3$ data bits. The fidelity and verification scans use an input of $500$ branches sampled without replacement from the $2^{n+k}$ address--bus components with uniform weights; at the smallest sizes $2^{n+k}<500$, the input is the full uniform superposition over all components. All scans use $200$ Monte Carlo trajectories per data point. The performance benchmark of Fig.~\ref{fig:perf} uses the data-loading input of one branch per address with zero bus input, i.e.\ exactly $D=2^n$ branches, and $50$ trajectories; Appendix~\ref{app:perf} documents the measurement and lists the raw values. The section reports first end-to-end benchmarks of the simulator---fidelity, runtime, and exactness---and then trajectory-level tests that verify each theoretical prediction in isolation. As a third, external validation, Appendix~\ref{app:baseline} cross-checks the noise model of the trajectory simulator against an independent circuit-level density-matrix simulation of the complete loading circuit.

\subsection{Benchmarks of the pruned simulator}

The benchmarks start with fidelity, since a speedup only matters once the output is right. Bad branches accumulate as $\mathbb{E}[B]=\mathcal{O}(n^{3}p)\,D$ (Sec.~\ref{sec:algorithm}), so the fidelity should fall with $n$, earlier for stronger noise. Figure~\ref{fig:fidelity} shows the trajectory-averaged fidelity of the qubit encoding for $\varepsilon=\gamma \in \{10^{-5},10^{-4},10^{-3}\}$. At $10^{-5}$ the fidelity stays above $0.988$ for all $n\le10$. At $10^{-3}$ it degrades rapidly beyond $n \approx 4$, consistent with the $\mathcal{O}(n^{3}p)$ bad-branch enlargement of Eq.~\eqref{eq:eb}.

The second question is what the pruning saves. The pruned mode evolves only the bad branches plus one reference branch (Sec.~\ref{sec:algorithm}), so its cost should rise from a single branch at zero noise toward the full $D$ branches as the noise grows. Figure~\ref{fig:perf} compares the full and pruned modes in wall-clock time per query (panel~a), pruned-over-full speedup (panel~b), and explicitly simulated branches (panel~c). At zero noise, the pruned mode evolves exactly the single reference branch, and the speedup reaches $285\times$ at $n=12$. At $\varepsilon=\gamma=10^{-5}$ it explicitly evolves only $3.6\%$--$9.3\%$ of the input branches in panel~c and runs at speedups of $10$--$21\times$ in panel~b. At $10^{-4}$, the bad-branch fraction grows from $12\%$ to $62\%$ with $n$, and the speedup decays from $7.9$ to $1.7$ accordingly. At $10^{-3}$ almost every branch is bad, so the pruned mode approaches the full one, recovering the three-regime structure familiar from the qutrit study of Ref.~\cite{wang2025qram}. The explicitly simulated branch count is also the algorithmic memory of the simulator. The pruned mode stores $\mathcal{O}(B\,2^{k})$ states versus $\mathcal{O}(D\,2^{k})$ for the full mode, and the measured ratios track the bad fraction as expected. Appendix~\ref{app:perf} lists the raw per-setting values and the measurement details.

The last test is exactness. We compare the full and pruned modes before the final tree measurement by reconstructing the predicted groups and matching their complete complex amplitudes, including the residual tree configuration and the input-branch label. We then compare the sampled tree, the normalized output amplitudes, the fidelity, and the random-generator state. All $1000$ trajectory pairs on the performance grid ($n\in\{4,6,8,10,12\}$, $\varepsilon=\gamma\in\{0,10^{-5},10^{-4},10^{-3}\}$, and $50$ seeds per point) agree exactly: every compared quantity coincides to within floating-point roundoff. The previously observed discrepancies originated in reconstruction bookkeeping: when a fired damping jump annihilated the reference good branch, the good groups that were not yet explicitly represented incorrectly retained their nominal weights. The reconstruction now propagates that annihilation to every predicted group. If the reference survives, ordinary reconstruction proceeds. Fixed noise histories test both cases independently of the underlying random-number implementation. This repair introduces no approximation to the predicted evolution. Representative paired fidelities are listed in Appendix~\ref{app:equiv}.

\subsection{Targeted verification of the closed forms and the error structure}

The benchmarks above do not by themselves certify the predictions of Proposition~\ref{prop:qutrit} and Theorems~\ref{thm:multi}--\ref{thm:goodbranch}, so we designed trajectory-level tests for each of them, using damping-only channels throughout. First, for the $H\!\to\!K_0^{2n}\!\to\!H$ formula we run the qubit circuit on a single input branch and select no-jump trajectories, namely those containing no $K_1$ jump. The normalized probability of every output word of the explicitly evolved branch is then compared against Eq.~\eqref{eq:hamming}. Over $n\in\{2,3,4,5,6,8\}$ and $\gamma\in[10^{-5},3\times10^{-2}]$ the agreement holds to floating-point accuracy. Words sharing the Hamming weight $w$ collapse onto a single value, as predicted and shown in Fig.~\ref{fig:verify_hamming}(a). Second, the same test in the qutrit encoding confirms Proposition~\ref{prop:qutrit}. The normalized final state of every good branch is exactly the ideal output. The measured relative weights between branches match the schedule counter $(1-\gamma)^{\Delta c}$ to machine precision, as shown in Fig.~\ref{fig:verify_hamming}(b).

\begin{figure*}[t]
\centering
\includegraphics[width=\textwidth]{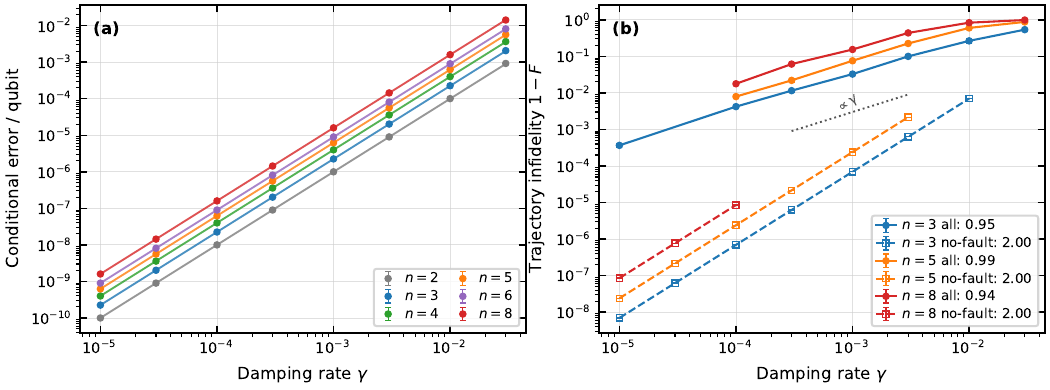}
\caption{Numerical verification of the error structure of Sec.~\ref{sec:errors} with damping-only channels. (a)~Per-data-qubit output error of no-jump trajectories in the qubit encoding, shown as points, versus the prediction $n^2\gamma^2/4$, shown as lines. Log-log fits give slopes $2.000$--$2.003$ and prefactor ratios within $10^{-4}$ of unity at every $n$. (b)~Trajectory-averaged query infidelity, shown as filled points and first order in $\gamma$, versus the same average restricted to fault-free trajectories, shown as open squares and second order. Ensembles containing fewer than $20$ fault events, namely the smallest-$\gamma$ points of the all-trajectory curves, are omitted from the plot and from the fits. Error bars show the standard error of the mean across trajectories.}
\label{fig:verify_scaling}
\end{figure*}

Figure~\ref{fig:verify_scaling}(a) shows the per-data-qubit output error of no-jump trajectories. It tracks $n^2\gamma^2/4$ with fitted slopes $2.000$--$2.003$ and prefactor ratio $1.0001$ at every $n$, confirming the purely second-order coherent leakage of Sec.~\ref{sec:errors}. Figure~\ref{fig:verify_scaling}(b) then separates trajectory ensembles with damping only and $500$ input branches. The trajectory-averaged query infidelity scales linearly in $\gamma$ with fitted slopes $0.94$--$0.99$. The same average restricted to fault-free trajectories scales quadratically, with slopes $2.00$, and lies orders of magnitude lower. All first-order infidelity is therefore carried by trajectories containing faults, as claimed in Sec.~\ref{sec:errors}. Finally, the exhaustive single-fault injection behind Eq.~\eqref{eq:badrange} covers every combination of node, slot, time slice, and fault type on the $n=3$ tree: the tree has $7$ routing nodes, each carrying a routing qubit and a data qubit, so there are $2\times 7=14$ injection positions; the $n=3$, $k=3$ query consists of the $T-1=23$ event-carrying time slices; and both fault types, an $X$ flip and a damping fault, are injected at every position and slice, giving $14\times 23\times 2=644$ cases. Every $X$-type fault reproduces the family-divergence envelope exactly. Of the $322$ damping cases, $178$ are vacuous because no branch carries an excitation in that slot at that slice, so the jump fires with probability zero, and the remaining $144$ stay inside their own subtree. There are zero containment violations.

\section{Conclusion and outlook}
\label{sec:conclusion}

Branch pruning under amplitude damping is sound only because good branches remain predictable under the ever-present no-jump operator. We have shown that this predictability holds in both QRAM encodings, although for sharply different reasons. In the qutrit encoding, the no-jump operator $K_0^{\mathrm{tri}}$ is diagonal and has the wait state as its fixed point, so the no-jump evolution of a good branch reduces to the schedule-computable scalar $a^{c}$, and every good branch ends in the ideal output times one common tree configuration, as stated in Proposition~\ref{prop:qutrit}. In the qubit encoding, the data bus crosses the routing tree in the $X$ basis between two Hadamard walls, so amplitude damping tilts it irreversibly and the scalar predictor does not apply; we proved that the resulting $H\to K_0^{2n}\to H$ structure is nevertheless closed-form predictable (Theorems~\ref{thm:multi} and~\ref{thm:goodbranch}), with per-data-qubit amplitudes governed only by the Hamming weight of the error pattern, Eq.~\eqref{eq:hamming}. The qutrit subtree-containment criterion likewise does not carry over verbatim: its true basis is the configuration uniformity of off-path components, which the wait level protects, and once $X$-type faults leave stuck excitations that migrate up idle ancestors, family divergence moves the bad range of a left-child fault up to its nearest right-child ancestor; the marking rule applies this criterion uniformly to every fault type, staying conservative for damping faults whose damage remains subtree-contained. Together with the address-exposure counter and this corrected criterion, the closed forms and the exact joint sampler of Theorem~\ref{thm:joint} complete a fast noisy simulator for qubit-encoded QRAM that evolves only the bad branches plus one reference branch, loses no accuracy at any order, carries stochasticity only in the explicitly simulated bad branches, and incurs conditional in-branch infidelity of second order only. The numerical experiments validate it at two levels: end-to-end benchmarks driven from identical noise histories agree seed by seed on the end state, with pruned-over-full speedups up to $285\times$, and trajectory-level tests confirm every closed form and the second-order conditional error structure against explicitly evolved branches, to machine precision. 

As an outlook, the wall-to-wall lemma extends naturally to other diagonal no-jump operators, such as the qutrit heating channel, and dephasing, for which the same scalar-counting or commutation arguments apply, is another candidate, so fast pruned simulation should carry over to further noise channels; error-filtration analyses~\cite{lee2023error} can in addition exploit the separation between the first-order infidelity carried by jump-containing trajectories and the second-order leakage inside a no-jump branch. The corrected criterion also carries an architectural lesson: damping faults remain confined to their own subtree, so the wider containment envelope of the qubit encoding comes exclusively from $X$-type strays, and an architecture that suppresses them would restore the tighter qutrit boundary, a property that also matters when the QRAM is embedded into full-stack quantum microarchitectures~\cite{zhou2026hima}. With coherent bucket-brigade routing now demonstrated on superconducting processors~\cite{zhang2025router, shen2025experiment, miao2025router}, the fast simulator completes the classical simulation chain from noisy device to large-scale quantum data access.

\section*{Data availability}
The simulator source code is available at \url{https://github.com/IAI-USTC-Quantum/QRAM-Simulator}.

\begin{acknowledgments}
This work has been supported by the National Key Research and Development Program of China (Grant Nos. 2023YFB4502500 and 2024YFB4504100), the National Natural Science Foundation of China (Grant No. 12404564), and the Anhui Province Science and Technology Innovation (Grant Nos. 202423s06050001 and 202423r06050002).
\end{acknowledgments}

\appendix

\section{Preliminary: the depolarizing pruning simulator of Ref.~\cite{wang2025qram}}
\label{app:prelim}

For self-containedness, this appendix summarizes the pruning simulator of Ref.~\cite{wang2025qram}, which was designed for depolarizing noise and is the starting point of our analysis. Fig.~\ref{fig:prelim} illustrates the data representation, the error sampling on the QRAM tree, and the pruned evolution.

\emph{Data representation.} The simulator never forms a dense state vector. The state of one shot is kept as an unnormalized sparse pure state, a dictionary
\begin{equation}
\ket{\Psi}=\sum_{(i,j,Q)\in\mathcal{S}} c_{i,j,Q}\,\ket{i}_A\ket{j}_D\ket{Q}_{\mathrm{tree}},
\label{eq:sparse}
\end{equation}
of basis components labeled by the address $i$, the data word $j$, and the classical tree configuration $Q$. Every gate of the bucket-brigade schedule is a reversible classical permutation, so each component follows its noiseless orbit independently and the circuit acts on the dictionary entry by entry. The norm is left unnormalized during the evolution and is restored at the end of each damping layer and at the final sampling. Channel-level observables follow from averaging over shots.

\emph{Error sampling.} Each shot samples the complete noise history in advance. For every time slice, each routing element and data qubit independently suffers a depolarizing fault with probability $\varepsilon$. The fault is recorded as a fault event $(v,t,W)$, with node $v$, slice $t$, and sampled Weyl operator $W$. Between fault events the evolution is the deterministic noiseless circuit. At the fault event, $W$ is applied as a unitary to the affected components. Because the routing path of address $i$ visits exactly the ancestors of leaf $i$, a fault at node $v$ can influence only branches whose path crosses $v$, that is, addresses in the leaf range of Eq.~\eqref{eq:subtree}. For amplitude damping, which this work adds on top of the depolarizing scheme, the candidates are drawn over the whole tree and the layer is resolved by the joint sampler of Sec.~\ref{sec:jumps}, with one normalization per layer.

\emph{Pruned simulation.} Given the sampled history, the bad-address union is $\mathcal{R}=\bigcup_{\mathrm{faults}}\mathrm{subtree}(v)$. Branches with $i\in\mathcal{R}$ are bad and are evolved explicitly, fault by fault. Branches with $i\notin\mathcal{R}$ are good. Their tree configuration stays on the noiseless orbit and their amplitude is untouched, so they need not be evolved at all. One reference good branch is still evolved explicitly as a consistency check. With $D$ input branches the expected number of bad branches is $\mathcal{O}(n^2 p)\,D$, which yields the cost $\mathcal{O}(D+p\,\poly(n))$ of the reference simulator. The explicitly evolved bad branches and the untouched good amplitudes are merged into one sparse state, normalized, and sampled.

\emph{Window convention.} The exposure windows of the address counter, Eq.~\eqref{eq:caddr}, are ambiguous by one slice at each edge, i.e.\ whether the first and last exposed slices are counted. We fix the convention once and apply it identically in the schedule counter, the joint sampler, and the explicitly evolved reference branch; all reported quantities refer to this convention.

\begin{figure*}[t]
\centering
\resizebox{0.97\textwidth}{!}{%
\begin{tikzpicture}[
  font=\scriptsize,
  box/.style={draw, rounded corners=2pt, align=center, inner sep=3pt},
  qnode/.style={draw, rounded corners=2pt, minimum width=15pt, minimum height=11pt, inner sep=1pt, fill=white},
  leaf/.style={draw, rounded corners=2pt, minimum width=17pt, minimum height=12pt, inner sep=1pt, fill=white},
  faultw/.style={starburst, starburst point height=2.6pt, minimum size=11pt, inner sep=0pt, draw=red!70!black, fill=red!25, font=\scriptsize},
  >=Stealth]
\node[font=\scriptsize\bfseries, anchor=west] at (-0.4, 6.15) {(a) Data representation};
\node[draw, rounded corners=3pt, fill=gray!6, anchor=north west, minimum width=4.55cm, minimum height=3.3cm] (dict) at (-0.4, 5.8) {};
\node[anchor=north west, align=left, font=\scriptsize] at (-0.25, 5.65)
  {$\ket{\Psi}=\displaystyle\sum_{(i,j,Q)} c_{i,j,Q}\,\ket{i}\ket{j}\ket{Q}$\\[3pt]
   $i$: address, \ $j$: data word\\[1pt]
   $Q$: classical tree configuration,\\[1pt]
   \phantom{$Q$: } one orbit per branch\\[3pt]
   Unnormalized; reversible gates\\
   act entry by entry};
\node[font=\scriptsize\bfseries, anchor=west] at (5.35, 6.15) {(b) Error sampling on the QRAM tree};
\node[qnode] (R) at (7.75, 5.3) {$a|d$};
\node[qnode] (A) at (6.55, 4.15) {$a|d$};
\node[qnode] (B) at (8.95, 4.15) {$a|d$};
\node[leaf] (L00) at (5.95, 3.0) {\tiny $00$};
\node[leaf] (L01) at (7.15, 3.0) {\tiny $01$};
\node[leaf] (L10) at (8.35, 3.0) {\tiny $10$};
\node[leaf] (L11) at (9.55, 3.0) {\tiny $11$};
\draw (R) -- (A); \draw (R) -- (B);
\draw (A) -- (L00); \draw (A) -- (L01);
\draw (B) -- (L10); \draw (B) -- (L11);
\draw[very thick, teal!70!black] (R) -- (A) -- (L01);
\node[teal!70!black, font=\tiny, align=left, anchor=west] at (5.35, 4.55) {Routing path\\ of $\ket{01}$};
\node[faultw] (W) at (8.95, 3.6) {$W$};
\node[anchor=west, font=\tiny, red!70!black] at (9.28, 3.6) {Fault $(v,t,W)$};
\begin{scope}[on background layer]
  \node[draw=red!70!black, dashed, rounded corners=8pt, fill=red!8,
        inner xsep=7pt, inner ysep=7pt, fit=(B)(L10)(L11)] (bad) {};
\end{scope}
\node[red!70!black, font=\scriptsize, align=center, anchor=north] at (8.95, 2.35)
  {$\mathrm{subtree}(v)$: addresses $\ket{10},\ket{11}$ marked bad};
\node[align=left, anchor=west, font=\scriptsize] at (5.35, 1.55)
  {Every element, every slice: fault w.p.\ $\varepsilon$\\
   (depolarizing: sampled Weyl operator $W$)};
\node[font=\scriptsize\bfseries, anchor=west] at (12.0, 6.15) {(c) Pruned simulation};
\node[draw=red!70!black, dashed, rounded corners=3pt, fill=red!8, anchor=north west,
      minimum width=5.1cm, minimum height=1.9cm] (badbox) at (12.0, 5.8) {};
\node[anchor=north west, align=left, font=\scriptsize] at (12.15, 5.65)
  {Bad ($i\in\mathcal{R}$): evolve explicitly,\\
   \phantom{Bad (}apply $W$ at each fault;\\
   \phantom{Bad (}branches $\ket{10},\ket{11}$,\\
   \phantom{Bad (}plus one reference branch};
\node[draw=teal!70!black, dashed, rounded corners=3pt, fill=teal!8, anchor=north west,
      minimum width=5.1cm, minimum height=1.5cm] (goodbox) at (12.0, 3.75) {};
\node[anchor=north west, align=left, font=\scriptsize] at (12.15, 3.6)
  {Good ($i\notin\mathcal{R}$): no evolution,\\
   \phantom{Good (}noiseless orbit $\ket{Q}$,\\
   \phantom{Good (}amplitude kept};
\node[box, fill=gray!6, minimum width=5.1cm, font=\scriptsize] (out) at (14.55, 1.7)
  {Merge, normalize, sample observables};
\draw[->] (badbox.south west) to[bend right=20] (out.west);
\draw[->] (goodbox.south) -- (out.north);
\draw[->, thick] (4.35, 4.35) -- (5.25, 4.35);
\draw[->, thick] (10.45, 4.35) -- (11.9, 4.35);
\end{tikzpicture}}
\caption{Schematic of the depolarizing pruning simulator of Ref.~\cite{wang2025qram}.
(a) The state is an unnormalized sparse dictionary of basis components $\ket{i}\ket{j}\ket{Q}$,
one per input branch, acted on entry by entry by the reversible schedule.
(b) Each shot samples the noise history in advance. Every element suffers a fault with
probability $\varepsilon$ per slice, recorded as a fault event $(v,t,W)$ holding the sampled Weyl
operator. The routing path of a branch (teal) visits only the ancestors of its leaf, so a
fault at node $v$ affects only the branches whose paths cross $v$. Its subtree is marked bad
and added to the union $\mathcal{R}$. In the example, the fault on the right child marks
$\ket{10},\ket{11}$, while the branch $\ket{01}$ stays good.
(c) The bad branches and one reference are evolved explicitly, applying the sampled $W$ at
each fault. Good branches keep their noiseless orbit and amplitude. The branches are merged,
normalized, and sampled.}
\label{fig:prelim}
\end{figure*}

\section{End-state equivalence of the full and pruned modes}
\label{app:equiv}

Table~\ref{tab:equiv} lists the end-state fidelities returned by the two modes for three random seeds at four representative operating points. The two numbers agree to machine precision, i.e.\ within a few units in the last place of the double-precision accumulation. This is the expected signature of an exact predictor, in which the pruned mode reconstructs the good branches from the reference branch through the XOR mirror rather than approximating them. Rows with identical fidelities across seeds correspond to trajectories in which no jump was fired, so that only the deterministic no-jump $K_0$ attenuation, identical in both modes, remains.

\begin{table}[h]
\centering
\caption{End-state fidelity of one noisy query, full versus pruned mode, same seed, with identical noise history, jump decisions, and final collapse. $k=3$, $500$ input branches.}
\label{tab:equiv}
\footnotesize
\begin{tabular}{c@{\hspace{4pt}}c@{\hspace{4pt}}c@{\hspace{6pt}}c@{\hspace{6pt}}c@{\hspace{6pt}}c}
\toprule
$n$ & $\varepsilon=\gamma$ & Seed & Full & Pruned & $|\Delta F|$ \\
\midrule
8  & $10^{-5}$ & 880000 & 0.999999914463 & 0.999999914463 & $<10^{-15}$ \\
8  & $10^{-5}$ & 880007 & 0.999999914463 & 0.999999914463 & $<10^{-15}$ \\
8  & $10^{-5}$ & 880014 & 0.999999914463 & 0.999999914463 & $<10^{-15}$ \\
\midrule
10 & $10^{-5}$ & 880000 & 0.999999854463 & 0.999999854463 & $<10^{-15}$ \\
10 & $10^{-5}$ & 880007 & 0.999999854463 & 0.999999854463 & $<10^{-15}$ \\
10 & $10^{-5}$ & 880014 & 0.999999854463 & 0.999999854463 & $<10^{-15}$ \\
\midrule
6  & $10^{-4}$ & 880000 & 0.565825591420 & 0.565825591420 & $<10^{-15}$ \\
6  & $10^{-4}$ & 880007 & 0.999996249505 & 0.999996249505 & $<10^{-15}$ \\
6  & $10^{-4}$ & 880014 & 0.999996249505 & 0.999996249505 & $<10^{-15}$ \\
\midrule
6  & $10^{-3}$ & 880000 & 0.015999567578 & 0.015999567578 & $<10^{-15}$ \\
6  & $10^{-3}$ & 880007 & 0.758346859335 & 0.758346859335 & $<10^{-15}$ \\
6  & $10^{-3}$ & 880014 & 0.999624779533 & 0.999624779533 & $<10^{-15}$ \\
\bottomrule
\end{tabular}
\end{table}

\section{Cross-check against an independent circuit-level simulation}
\label{app:baseline}

The validations of Sec.~\ref{sec:results} and Appendix~\ref{app:equiv} compare two modes of the same simulator driven from identical noise histories. This appendix adds an external reference: the complete qubit-encoded loading circuit unfolded gate by gate on an ordinary qubit register and evolved as a density matrix under explicitly applied channels---no trajectory sampling, no sparse-tree bookkeeping, and no shared state representation or noise-channel implementation with the branch simulator; the time-slice schedule alone is read from the same scheduler. The register carries $n+k+2(2^{n}-1)$ qubits, so the cross-check is restricted to the smallest noisy instance $n=2$, $k=1$ ($9$ qubits, memory $0110$, data-loading input of one branch per address with zero bus), over the full operating range of both noise channels.

\emph{Encoded circuit.} The register consists of the $n$ address qubits, the $k$ bus qubits, and, for each routing node $v$, a routing qubit $a_v$ and a data qubit $d_v$. Every logical operation of the bucket-brigade schedule becomes an ordinary gate sequence: the first-layer address copy is a CNOT from address bit $n\!-\!1\!-\!\ell$ into $d_0$; the bus enters and leaves through SWAPs with $d_0$, wrapped in Hadamard walls on the bus; the internal swap of layer $\ell$ is $\mathrm{SWAP}(a_v,d_v)$ on every node of the layer; the controlled swap of layer $\ell$ acts on every node as $\mathrm{SWAP}(d_v,d_{2v+1})$ controlled by $a_v=0$ and $\mathrm{SWAP}(d_v,d_{2v+2})$ controlled by $a_v=1$; and the memory fetch is a phase kickback, $\pi$ rotations on the leaf data qubits controlled by the leaf routing qubit for every set memory bit. The time-slice schedule is read from the scheduler that also drives the trajectory simulator, so the cross-check pins down the encoded form of each operation, the channel placement, and the averaging---while the noise-free output distributions of the two simulators agree exactly (to machine precision).

\emph{Channel placement.} The channels act only on the $2(2^{n}-1)$ tree qubits; the address register and the bus stay noise-free, as in the model of Sec.~\ref{sec:noise}. In a time slice whose routing front spans $\ell$ layers, each of the $N_\ell=2(2^{\ell}-1)$ active tree positions carries the depolarizing channel with marginal probability $p$ per slice, matching the binomial fault draw of the trajectory sampler. Depolarizing is the uniform $\{X,Z,Y\}$ mixture of total strength $\varepsilon$, the bit-phase-flip element being the unitary $ZX=iY$. The damping layer applies the Kraus operators of Eq.~\eqref{eq:kraus} site by site as fixed matrices, with no state-dependent weights and no post-selection; both sides of the comparison are trace preserving and evaluated at unit trace.

\emph{Conventions.} The ideal end state carries, per address $a$, the bus word $b\oplus m[a]$ and the tree in its ground state. On the trajectory side, $2000$ shots yield the per-trajectory fidelity of the normalized state after the final tree measurement, averaged over shots, together with its Monte Carlo standard error; on the density side the same quantity is exact, $F_{\rho}=\langle\psi_{\mathrm{id}}|\rho|\psi_{\mathrm{id}}\rangle$. Distribution agreement is measured by the classical fidelity $F_{\mathrm{cls}}=(\sum_q\sqrt{pq})^{2}$ and the total-variation distance of the renormalized joint (address, bus) output distributions, and the states themselves by the Frobenius distance $\|\Delta\rho\|_{F}$. Every quantity is evaluated at unit trace on both sides, so each pair of like quantities should agree within the Monte Carlo error.

Table~\ref{tab:base-circuit} lists the result. At every operating point the trajectory ensemble and the density matrix agree within the Monte Carlo scale: the fidelity difference is at most $7\times10^{-3}$ and the output distributions agree at $\mathrm{TVD}\le0.014$, with both sides carrying unit trace. For a sample mean of normalized pure projectors the estimated root-mean-square Frobenius error is $\sqrt{(1-\operatorname{Tr}\overline{\rho}^{\,2})/(N_{\mathrm{shots}}-1)}$, and the observed $\|\Delta\rho\|_{F}$ values are compatible with this scale.

\begin{table}[h]
\centering
\caption{Circuit-level cross-check: trajectory simulator ($2000$ shots) versus density matrix applying the fixed Kraus matrices of Eq.~\eqref{eq:kraus}. $n=2$, $k=1$, memory $0110$, data-loading input. Both sides are trace preserving.}
\label{tab:base-circuit}
\footnotesize
\begin{tabular}{lcccc}
\toprule
Noise & TVD & $F_{\rho}$ & $\overline{F}_{\mathrm{traj}}$ & $\|\Delta\rho\|_{F}$ \\
\midrule
Pauli $p=0.02$      & 0.0017 & 0.7263 & 0.7285 & 0.0089 \\
Damping $\gamma=0.05$ & 0.0128 & 0.5759 & 0.5759 & 0.0163 \\
Mixed $p=\gamma=0.02$ & 0.0123 & 0.5841 & 0.5768 & 0.0164 \\
Damping $\gamma=0.2$  & 0.0133 & 0.1676 & 0.1636 & 0.0157 \\
\bottomrule
\end{tabular}
\end{table}

As a second, fully external check, the same circuit description and sampled histories are re-run with the \texttt{uniqc}/QuTiP implementation~\cite{johansson2012qutip}. Six fixed noise histories reproduce the trajectory simulator's output probabilities to machine precision, entry by entry. Averaging $2000$ trajectories at $\gamma=0.05$ gives $\mathrm{TVD}=0.0128$ and classical fidelity $0.9996$ against the density side, and the mixed setting $p=\gamma=0.02$ gives $\mathrm{TVD}=0.0123$ and $0.9996$ likewise; both implementations report unit trace. The exported histories carry the sampler tag \texttt{joint\_auxiliary\_whole\_tree\_v1} and the layer-normalization convention of Sec.~\ref{sec:jumps}; the experiment is packaged in the \texttt{qram-simulator} Python package and re-runs with a single command at fixed seed.

\section{Experiment details and raw data of the performance benchmark}
\label{app:perf}

This appendix documents how the performance benchmark of Fig.~\ref{fig:perf} is produced and lists the raw values behind its panels. Each data point pairs a noise level $\varepsilon=\gamma\in\{0,10^{-5},10^{-4},10^{-3}\}$ with an address size $n\in\{4,6,8,10,12\}$, at $k=3$ data bits and the data-loading input of one branch per address, so the full mode evolves exactly $D=2^{n}$ branches. Every setting runs $50$ independent noise histories, and the two modes of a setting share the per-trajectory seed, so the pair is driven by the identical fault history, jump decisions, and final collapse; only the execution differs.

The wall-clock time per query is read from a monotonic clock around two stages: the execution of the query, which contains the sampling of the fault history, the branch evolution, and, in the pruned mode, the analytical substitution of the good branches; and the sampling of the output register that produces the fidelity estimate. The speedup of panel~(b) is the ratio of the mean wall-clock time of the full mode to that of the pruned mode over the same $50$ trajectories. The evolved fraction of panel~(c) is the number of explicitly evolved branches, the bad branches plus the single reference branch, divided by $D=2^{n}$. The benchmark is compiled with MSVC in the Release configuration and runs single-threaded on an Intel Core i9-13900KF under Windows~11.

Table~\ref{tab:perfraw} lists the per-setting means; the three panels of Fig.~\ref{fig:perf} plot exactly these columns.

\begin{table}[h]
\centering
\caption{Raw performance data of Fig.~\ref{fig:perf}: mean wall-clock time per query in milliseconds over $50$ trajectories, the pruned-over-full speedup (the ratio of the two time columns), and the fraction of input branches evolved explicitly.}
\label{tab:perfraw}
\footnotesize
\begin{tabular}{c@{\hspace{4pt}}c@{\hspace{6pt}}r@{\hspace{6pt}}r@{\hspace{6pt}}r@{\hspace{6pt}}r}
\toprule
$\varepsilon=\gamma$ & $n$ & $t_{\mathrm{full}}$ & $t_{\mathrm{pruned}}$ & Speedup & Evolved (\%) \\
\midrule
0 & 4 & 0.5731 & 0.0307 & 18.7 & 6.25 \\
0 & 6 & 4.043 & 0.0859 & 47.1 & 1.56 \\
0 & 8 & 21.56 & 0.2424 & 88.9 & 0.39 \\
0 & 10 & 124.8 & 0.7592 & 164 & 0.10 \\
0 & 12 & 725.1 & 2.548 & 285 & 0.02 \\
\midrule
$10^{-5}$ & 4 & 0.7496 & 0.0497 & 15.1 & 6.25 \\
$10^{-5}$ & 6 & 4.815 & 0.2348 & 20.5 & 3.88 \\
$10^{-5}$ & 8 & 29.12 & 1.417 & 20.6 & 3.61 \\
$10^{-5}$ & 10 & 173.9 & 11.29 & 15.4 & 5.28 \\
$10^{-5}$ & 12 & 1258 & 125.4 & 10.0 & 9.32 \\
\midrule
$10^{-4}$ & 4 & 0.7586 & 0.0964 & 7.87 & 11.75 \\
$10^{-4}$ & 6 & 4.885 & 0.8590 & 5.69 & 16.28 \\
$10^{-4}$ & 8 & 31.26 & 8.250 & 3.79 & 24.04 \\
$10^{-4}$ & 10 & 243.0 & 95.92 & 2.53 & 38.34 \\
$10^{-4}$ & 12 & 2750 & 1655 & 1.66 & 61.54 \\
\midrule
$10^{-3}$ & 4 & 0.7895 & 0.3668 & 2.15 & 45.00 \\
$10^{-3}$ & 6 & 6.251 & 4.491 & 1.39 & 69.28 \\
$10^{-3}$ & 8 & 55.15 & 51.81 & 1.06 & 93.08 \\
$10^{-3}$ & 10 & 800.1 & 794.1 & 1.01 & 99.03 \\
$10^{-3}$ & 12 & 11814 & 11816 & 1.00 & 99.89 \\
\bottomrule
\end{tabular}
\end{table}

\end{document}